\documentclass[11pt]{article}             
\usepackage{osid}                         %

\usepackage{amsmath,amssymb,amscd,amsthm,mathrsfs}
\usepackage{bbm}
\usepackage{comment} 
\usepackage{enumerate} 
\usepackage{color}

\usepackage[bookmarks=false,pdfstartview={FitH}]{hyperref}

\theoremstyle{plain}
\newtheorem{lemma}{Lemma}
\newtheorem{thm}{Theorem}

\newtheorem{proposition}{Proposition}

\theoremstyle{definition}
\newtheorem{defi}{Definition}

\theoremstyle{remark}
\newtheorem{remark}{Remark}

%
%
%

\title{Finite-Dimensional Stinespring Curves Can Approximate Any Dynamics} 
\author{Frederik vom Ende
	\\[1mm]{\footnotesize\it Dahlem Center for Complex Quantum
Systems, Freie Universität Berlin, Arnimallee 14, 14195 Berlin, Germany
 \& {frederik.vom.ende@fu-berlin.de}}\\[2ex]
}

\AtBeginDocument{
  \label{CorrectFirstPageLabel}
  
}

\begin{document}

\maketitle
\begin{abstract}
We generalize a recent result stating that all analytic quantum dynamics can be represented exactly as the reduction of unitary dynamics generated by a time-dependent Hamiltonian. More precisely, we prove that the partial trace over analytic paths of unitaries can approximate any Lipschitz-continuous quantum dynamics arbitrarily well. Equivalently, all such dynamics can be approximated by analytic Kraus operators. We conclude by discussing potential improvements and generalizations of these results, their limitations, and the general challenges one has to overcome when trying to relate dynamics to quantities on the system-environment level.
\end{abstract}

\section{Introduction}
Combining fundamental concepts from quantum information---such as Kraus operators or Stinespring dilations---with quantum dynamics is an area of research which, surprisingly, only gained traction within the last decade or two.
Examples of this are recent investigations \cite{Burgarth22,vE22_Stinespring} into an old result of Davies on (infinite-dimensional) unitary dilations of (finite-dimensional) dynamical semigroups \cite{Davies72,Davies76}, as well as works on expressing system dynamics via time-dependent Kraus operators \cite{ACH07,HCLA14,ABCM14}.
Such sets of Kraus operators which are continuous w.r.t.~some parameter (not necessarily time) could also be useful for optimization tasks beyond parametrized unitary circuits \cite{OGB21}.
Moreover, the existence of a continuous curve of Stinespring unitaries for time-independent Markovian dynamics \cite{Dive15} has recently been used for studying quantum-embeddability of stochastic matrices \cite{Shahbeghi23}.

At first glance, the task of constructing a continuous curve of Stinespring unitaries may appear trivial: given any dynamics $\Phi(t)$,
for all $t$ one can choose a unitary $U_t$
such that $\Phi(t)={\rm tr}_{E}(U_t((\cdot)\otimes|0\rangle\langle 0|)U_t^*)$
by the ``static'' version of 
Stinespring's representation theorem \cite[Thm.~6.18]{Holevo12}.
The caveat is that, in general, the resulting map
$U:t\mapsto U_t$ 
will lack any kind of structure, because we picked the unitaries independent of each other.
In particular, there is no reason for $U$ to be the solution to
Schr\"odinger's equation,
meaning it does not describe the dynamics of a larger closed system.
We emphasize that this issue has nothing to do with the representation we picked: the same problem arises on the level of Stinespring isometries.
While there, the only degree of freedom is a unitary on the environment \cite[Coro.~2.24]{Watrous18},
despite a continuity result for Stinespring isometries \cite{Kretschmann08,Kretschmann08_2}
it is unclear how to make an arbitrary
curve of isometries $t\mapsto V(t)$ (that gives rise to continuous dynamics)
continuous via environment unitaries,
i.e.~it is not clear how to construct $t\mapsto U(t)$ unitary such that $t\mapsto (\mathbbm1\otimes U(t))V(t)$ is continuous.
The fact that this problem is independent of the chosen formalism is also in line (i) with
the fact that the Kraus and the Stinespring representation are equivalent, even for dynamics (Lemma~\ref{lemma_equiv_repr} below)
and (ii) with a remark of Pollock and Modi from \cite{PM18} that
``recovering the underlying [system-environment] quantities [...] from operationally reconstructible time-dependent maps [...] is neither uniquely constrained nor easy to achieve in practice.'' 

On the other hand, the heuristic derivation of Markovian dynamics is fundamentally tied to Stinespring dilations, as the former
are often introduced as the reduced dynamics of a larger closed system---assuming the system in question satisfies usual approximations (Born-Markov, separation of timescales, etc.)
\cite{BreuPetr02}.
Yet, this connection was made rigorous only about a decade ago: as mentioned above, Dive et al.~\cite{Dive15} showed that every quantum-dynamical semigroup can be written as the reduction of (finite-dimensional, time-dependent) unitary dynamics which are analytic everywhere aside from time zero;
for the precise statement we refer to Prop.~\ref{prop_dive} below.
The key step in their proof is to diagonalize the path of Choi matrices to obtain dynamic Kraus operators.
For this they used an old result of Kato \cite[Ch.~2 §6.2]{Kato80} on diagonalizing analytic paths of Hermitian matrices.

It may come as a surprise that the assumption of the dynamics being analytic is necessary for this proof strategy to work:
an example of a smooth, non-analytic curve of Hermitian matrices which does not admit any path of eigenvectors that is continuous at time zero is given in \cite[Ch.~2, Ex.~5.3]{Kato80}.
And this is not the only issue that comes up: Dive et al.~also observed
that the unitaries not being analytic at time zero is due to divergence of the generating Hamiltonian (cf.~also \cite[Coro.~4.5]{Burgarth22}).
This is not a flaw of the construction, but rather an unavoidable phenomenon:
If the unitary dynamics were differentiable at zero, then the derivative of the reduced dynamics has to be tangent to the set of quantum channels, i.e.~at time zero the reduced dynamics ``look like'' unitary dynamics \cite{vE22_Stinespring}.
But this contradicts the dissipative nature of (non-unitary) Markovian dynamics.

Altogether, the problem of fully connecting quantum dynamics with quantities 
on the system-environment level is still open.
In this paper we take a first step in this direction by looking at a relaxed version 
of this problem: We show that all dynamics can at least be approximated 
arbitrarily well via the partial trace over the dynamics of a larger (closed) system, and the 
Hamiltonian of the latter dynamics can always be chosen uniformly bounded, and even analytic for finite times
(Thm.~\ref{thm0} below).
This is also motivated by another observation made by Dive et al.~\cite{Dive15}:
while the 
overall
Hamiltonian 
has to diverge at zero, approximating it by something bounded (in time) only results in a ``small'' error on the level of the reduced dynamics.
\section{Preliminaries: Quantum Dynamics \& Stinespring Curves}\label{sec:prelim}
First some notation to set the stage: 
the collection of all linear maps on $n\times n$-matrices will be denoted by $\mathcal L(\mathbb C^{n\times n})$ while the subset of completely positive, trace-preserving maps (also called quantum maps or quantum channels) will be called\footnote{
While we work in the Schr\"odinger picture, all of our results have analogous formulations in the Heisenberg picture by means of the usual duality \cite{vE_dirr_semigroups}.
}
$\mathsf{CPTP}(n)$.
We will write 
$\mathbb D(\mathbb C^n)$ 
for the set of all $n$-level quantum states
which, as usual, is equipped with the trace norm $\|\cdot\|_1$ (i.e.~the sum of all singular values of the input).
In contrast $\|\cdot\|_\infty$ refers to the operator norm (on matrices) which is given by the largest singular value of the input.
Then, going one level higher,
the operator norm on $\mathcal L(\mathbb C^{n\times n})$ with respect to the trace norm
will be denoted $\|\cdot\|_{1\to 1}$, i.e.~$\|\Phi\|_{1\to 1}:=\sup_{A\in\mathbb C^{n\times n},\|A\|_1=1}\|\Phi(A)\|_1$,
while the symbol for the diamond norm (completely bounded trace norm) will be $\|\cdot\|_\diamond$, i.e.~$\|\Phi\|_\diamond:=\|\Phi\otimes{\rm id}_n\|_{1\to 1}$ \cite{Watrous18}.
Another norm of importance will be the ${\rm sup}$-norm: given a non-empty set $D$ and a map $f:D\to (X,\|\cdot\|_X)$ one defines $\|f\|_{\rm sup}:=\sup_{x\in D}\|f(x)\|_X$. For our purpose, the ${\rm sup}$-norm of maps $\Phi:D\subseteq\mathbb R\to\mathcal L(\mathbb C^{n\times n})$ will be evaluated with respect to
$\|\cdot\|_X=\|\cdot\|_\diamond$.
Finally, $\mathsf U(n)$ is the Lie group of all unitary $n\times n$ matrices, and $\mathfrak u(n)$ is its Lie algebra, i.e.~$i\mathfrak u(n)$ is the collection of all Hermitian $n\times n$ matrices.

As we saw in the introduction it will be vital to be precise about
what it means for a curve of unitaries $(U(t))_{t\geq 0}$---which are, e.g., not differentiable at zero---to describe system dynamics, i.e.~we want $U$ to satisfy
\begin{equation}\label{eq:schrodinger}
\dot U(t)=-iH(t)U(t)\qquad U(0)=\mathbbm1
\end{equation}
in some reasonable, yet general sense.
For this we go back to the fundamentals of differential equations and initial value problems.
The mathematical notion here---which is central to, e.g., control theory, as there one often works with 
piecewise constant 
functions
(i.e.~they cannot be the conventional derivative of some function)---is ``absolute continuity''.
Recall that a (locally) absolutely continuous function $f$ is a function which is differentiable almost everywhere and which can be reconstructed
from its derivative $\dot f$---which is (locally) integrable---via
$f(t)=f(t_0)+\int_{t_0}^t\dot f(\tau)\,{\rm d}\tau$.
The set of all locally integrable functions $f:I\subseteq\mathbb R\to V$ into a normed space $V$ will be denoted by $L^1_{\sf loc}(I,V)$.
Details can be found in Appendix~A.
Either way, combining this notion with Eq.~\eqref{eq:schrodinger} yields the integral version of the 
lifted Schr\"odinger equation:
\begin{equation}\label{eq:schrodinger-int}
U(t)=\mathbbm1-i\int_{0}^t H(\tau)U(\tau)\,{\rm d}\tau
\end{equation}
This leads to the arguably most general notion of dynamics of a closed quantum system:
\begin{defi}\label{def_closed_sys_dyn}
Given $t_f\in(0,\infty]$ a map $U:[0,t_f)\to\mathsf U(n)$ is said to describe 
\textit{closed system dynamics} if there exists $H:[0,t_f)\to i\mathfrak u(n)$ locally integrable
such that $U$ solves~\eqref{eq:schrodinger-int}
for all $t\in[0,t_f)$.
\end{defi}
\noindent Equivalently, closed system dynamics are precisely those curves of unitaries which are locally absolutely continuous and which take the value $\mathbbm1$ at zero.
For a more detailed treatment of 
such (so-called ``mild'') solutions 
of differential equations,
refer, again, to Appendix~A.\medskip

As we saw in the introduction, in this framework it is very much possible for dynamical semigroups to arise from closed system dynamics of a larger system when tracing out the environment.
Let us give such objects their own name, see also \cite{vE22_Stinespring}:

\begin{defi}\label{def_stinespring_curves}
Let $t_f\in(0,\infty]$ and $\Phi:[0,t_f)\to\mathsf{CPTP}(n)$ be given.
We say $\Phi$ is a \textit{Stinespring curve} if there exists $m\in\mathbb N$, a state $\omega\in\mathbb D(\mathbb C^{m})$, as well as
$U:[0,t_f)\to\mathsf U(mn)$ locally absolutely continuous
such that 
for all $t\in[0,t_f)$
\begin{equation*}
\Phi(t)= {\rm tr}_{\mathbb C^{m}}\big( U(t)((\cdot)\otimes\omega) U(t)^*\big)\,.
\end{equation*}
Here ${\rm tr}_{\mathbb C^{m}}:\mathbb C^{n\times n}\otimes\mathbb C^{m\times m}\to\mathbb C^{n\times n}$ is the usual partial trace.
Additionally, if $U$ is generated by a fixed (i.e.~time-independent) Hamiltonian $H\in i\mathfrak u(mn)$, i.e.~if $U(t)=e^{-iHt}U_0$ for some $U_0\in\mathsf U(mn)$ and
all $t\in[0,t_f)$, then we call $\Phi$ \textit{time-independent};
else we call the Stinespring curve \textit{time-dependent}.
\end{defi}

With this notation in place, the main result of Dive et al.~\cite{Dive15} reads as follows:
\begin{proposition}\label{prop_dive}
Let $n\in\mathbb N$, $t_f\in(0,\infty]$, and
$\Phi:[0,t_f)\to\mathsf{CPTP}(n)$ analytic be given.
Then $\Phi$ is a Stinespring curve with environment dimension at most $n^2$,
and the closed system dynamics $U$ of system plus environment can 
be chosen such that $U|_{(0,t_f)}$ is analytic. In particular, this holds for all $\Phi$ 
time-independent Markovian, i.e.~$\Phi(t)=  e^{tL}$ with $L$ of \textsc{gksl}-form \cite{GKS76,Lindblad76}.
\end{proposition}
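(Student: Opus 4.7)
The plan is to diagonalize the path of Choi matrices to obtain Kraus operators that are analytic away from time zero, bundle them into a Stinespring isometry, and extend this to a path of unitaries. First, define the Choi matrix $C(t) := (\Phi(t) \otimes \mathrm{id}_n)(|\Omega\rangle\langle\Omega|)$ with $|\Omega\rangle := \sum_i |i\rangle \otimes |i\rangle$. Analyticity of $\Phi$ transfers to $C:[0,t_f)\to\mathbb{C}^{n^2\times n^2}$, complete positivity gives $C(t)\geq 0$, and Kato's analytic perturbation theorem (the key tool of Dive et al.) produces real-analytic eigenvalues $\lambda_j:[0,t_f)\to[0,\infty)$ together with real-analytic orthonormal eigenvectors $|v_j(t)\rangle$ such that $C(t)=\sum_{j=1}^{n^2}\lambda_j(t)|v_j(t)\rangle\langle v_j(t)|$. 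Non-negativity forces any \emph{interior} zero of $\lambda_j$ to be of even order.

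Next I pass to Kraus operators. On the open interval $(0,t_f)$ each $\lambda_j$ therefore admits a real-analytic signed square root $\mu_j$, where sign flips at interior zeros are absorbed into the corresponding eigenvector (harmless since Kraus operators are defined up to a global phase). At the boundary $t=0$, an odd-order vanishing of $\lambda_j$ produces at worst a $\sqrt{t}$-type singularity, so $\mu_j$ is merely continuous there. Reshaping $|v_j(t)\rangle\in\mathbb{C}^n\otimes\mathbb{C}^n$ into matrices $W_j(t)\in\mathbb{C}^{n\times n}$ via the inverse Choi isomorphism and setting $K_j(t):=\mu_j(t)W_j(t)$ yields Kraus operators for $\Phi(t)$ that are analytic on $(0,t_f)$, continuous on $[0,t_f)$, and satisfy $\sum_j K_j(t)^*K_j(t)=\mathbbm{1}$. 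Bundling them into $V(t):=\sum_{j=1}^{n^2}K_j(t)\otimes|j\rangle$ gives the corresponding Stinespring isometries $\mathbb{C}^n\to\mathbb{C}^n\otimes\mathbb{C}^{n^2}$, so the environment dimension is $m=n^2$.

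The final step is to lift $V$ to unitaries $U(t)\in\mathsf{U}(n^3)$ satisfying $U(0)=\mathbbm{1}$. Since $\Phi(0)=\mathrm{id}$ forces each $K_j(0)=c_j\mathbbm{1}$ with $\sum_j|c_j|^2=1$, one has $V(0)=\mathbbm{1}_n\otimes|\psi_0\rangle$ for the unit vector $|\psi_0\rangle:=\sum_j c_j|j\rangle$, and a fixed environment rotation sending $|\psi_0\rangle\mapsto|0\rangle$ makes the input state $\omega:=|0\rangle\langle 0|$. On $(0,t_f)$, applying Kato's theorem to the analytic projection $P(t):=\mathbbm{1}_{n^3}-V(t)V(t)^*$ of constant rank $n^3-n$ yields an analytic orthonormal frame of $\mathrm{range}(V(t))^\perp$; concatenating this with the columns of $V(t)$ defines $U:(0,t_f)\to\mathsf{U}(n^3)$ analytically. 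Post-composing on the right by a fixed constant unitary that fixes the subspace $\mathbb{C}^n\otimes|0\rangle$ then enforces $\lim_{t\to 0^+}U(t)=\mathbbm{1}$ without disturbing the Stinespring property, while the $\sqrt{t}$-type bound on $\dot K_j$ near zero makes $\dot U$ integrable, giving local absolute continuity of $U$ on $[0,t_f)$.

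The main obstacle lies precisely in this endpoint behaviour at $t=0$: although $V$ and $P$ extend continuously to zero, $P$ is generically \emph{not} analytic there, so Kato's analytic complement frame is produced only on the open interval, and one must argue that the gauge of this frame can be fixed so that $U(t)$ extends continuously to the prescribed initial value while keeping $\|\dot U\|$ integrable on $[0,\varepsilon]$. This non-smoothness at zero cannot be avoided: as the introduction recalls, for genuinely dissipative Markovian dynamics the generating Hamiltonian $H(t)=i\dot U(t)U(t)^*$ necessarily diverges at $t=0$. The time-independent Markovian case $\Phi(t)=e^{tL}$ is entire analytic on $[0,\infty)$, so the construction above applies directly.
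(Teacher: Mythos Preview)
The paper does not give its own proof of this proposition; it is quoted as the main result of Dive et al.\ and only the one-line strategy---diagonalise the analytic path of Choi matrices via Kato's theorem \cite[Ch.~2, \S6.2]{Kato80}---is recalled in the introduction. Your proposal implements precisely that strategy and is essentially correct.

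One remark on the step you yourself flag as the main obstacle. Instead of building the complementary frame on $(0,t_f)$ via Kato and then arguing that some gauge choice extends it to $t=0$ with integrable derivative (your post-composition by a fixed unitary does not by itself guarantee that the limit $\lim_{t\to0^+}U(t)$ exists), you can use the construction of Lemma~\ref{lemma_app_A} in the paper directly: solve $\dot W(t)=Q(t)W(t)$, $W(0)=\mathbbm{1}$, with
\[
Q(t)=(\mathbbm{1}-V(t)V(t)^*)\dot V(t)V(t)^*-V(t)\dot V(t)^*(\mathbbm{1}-V(t)V(t)^*)\,.
\]
Your $t^{-1/2}$-bound on $\dot V$ near zero makes $Q\in L^1_{\sf loc}([0,t_f))$, so $W$ exists on the closed interval, is locally absolutely continuous there, and is analytic on $(0,t_f)$ because $Q$ is. This produces $U(0)=\mathbbm{1}$ automatically and removes the gauge ambiguity.

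A minor point: the proposition as stated does not assume $\Phi(0)=\mathrm{id}$, so the part of your argument forcing $K_j(0)=c_j\mathbbm{1}$ applies only to the Markovian special case. For general analytic $\Phi$ one simply drops the requirement $U(0)=\mathbbm{1}$, which is consistent with Definition~\ref{def_stinespring_curves}.
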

\begin{remark}
As explained in the introduction, for Markovian dynamics the Hamiltonian of system plus environment has to diverge at zero.
Yet we want to mention that, in the past, systems with couplings which give rise to a bounded curve of Hamiltonians---thus necessarily resulting in non-Markovian dynamics---have been studied in the literature \cite{Chakraborty19}.
\end{remark}

While we are focusing on the environmental form of quantum channels in this work, let us reiterate that going to the level of Kraus operators yields an identical problem. This is due to the
equivalence of representations even in the dynamic setting, refer also to \cite[Ch.~3.2]{BreuPetr02}:

\begin{lemma}\label{lemma_equiv_repr}
Let $k,n\in\mathbb N$, $t_f\in(0,\infty]$ as well as $\Phi:[0,t_f)\to\mathsf{CPTP}(n)$ be given. The following statements are equivalent.
\begin{enumerate}[(i)] 
\item \label{lemma_equiv_repr_item_kraus} There exist functions $K_1,\ldots,K_k:[0,t_f)\to\mathbb C^{n\times n}$ locally absolutely continuous such that for all $t\in[0,t_f)$
$$
\Phi(t)= \sum_{j=1}^k K_j(t)(\cdot)K_j(t)^*\,.
$$
\item \label{lemma_equiv_repr_item_iso} There exists $V:[0,t_f)\to\mathbb C^{nk\times n}$ locally absolutely continuous such that for all $t\in[0,t_f)$, $V(t)$ is an isometry and
$$
\Phi(t)= {\rm tr}_{\mathbb C^k}\big(V(t)(\cdot)V(t)^*\big)\,.
$$
\item \label{lemma_equiv_repr_item_unitary} There exist an isometry $V\in\mathbb C^{nk\times n}$ as well as closed system dynamics $U:[0,t_f)\to\mathsf U(nk)$ such that for all $t\in[0,t_f)$
$$
\Phi(t)= {\rm tr}_{\mathbb C^k}\big(U(t)V(\cdot)V^*U(t)^*\big)\,.
$$
\end{enumerate}
Moreover, if $\Phi(0)={\rm id}$, then each of the above is equivalent to
\begin{itemize}
\item[(iii')] For all unit vectors $\psi\in\mathbb C^k$ (i.e.~$\|\psi\|=1$) there exist closed system dynamics $U:[0,t_f)\to\mathsf U(nk)$ such that for all $t\in[0,t_f)$
$$
\Phi(t)= {\rm tr}_{\mathbb C^k}\big(U(t)((\cdot)\otimes|\psi\rangle\langle\psi|)U(t)^*\big)\,.
$$
\end{itemize}
Finally, if any of the above objects is in some regularity class (e.g., continuously differentiable, smooth, analytic, etc.) on an open subinterval of $[0,t_f)$, then so are all the others.
\end{lemma}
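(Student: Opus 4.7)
The equivalence $(i)\Leftrightarrow(ii)$ is purely algebraic: one passes between a Kraus tuple and a Stinespring isometry via $V(t)=\sum_{j=1}^k K_j(t)\otimes|j\rangle$ and $K_j(t)=(\mathbbm1\otimes\langle j|)V(t)$ for any orthonormal basis $\{|j\rangle\}_j$ of $\mathbb C^k$. These block-stacking and block-extraction operations are linear in the matrix entries, so they preserve local absolute continuity and indeed any pointwise regularity class; under them $\sum_j K_j(t)^*K_j(t)=\mathbbm1$ is precisely $V(t)^*V(t)=\mathbbm1$, and a direct computation matches the two reduced-channel formulas. The implication $(iii)\Rightarrow(ii)$ is immediate: set $V(t):=U(t)V$ and note that left-multiplication by the locally absolutely continuous unitary $U(t)$ preserves both the isometry property and local absolute continuity.

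The heart of the proof is $(ii)\Rightarrow(iii)$. Given $V:[0,t_f)\to\mathbb C^{nk\times n}$ locally absolutely continuous with $V(t)^*V(t)=\mathbbm1$, I would fix $V:=V(0)$ and write down the explicit Hamiltonian
\[
H(t)\;:=\;i\dot V(t)V(t)^*-iV(t)\dot V(t)^*\bigl(\mathbbm1-V(t)V(t)^*\bigr),
\]
which is locally integrable because $V$ is locally bounded and $\dot V$ locally integrable. Two properties are then checked: (a) $H(t)$ is Hermitian, which reduces to the skew-Hermiticity identity $V(t)^*\dot V(t)+\dot V(t)^*V(t)=0$ obtained by differentiating $V(t)^*V(t)=\mathbbm1$; and (b) $H(t)V(t)=i\dot V(t)$, which follows directly from $(\mathbbm1-V(t)V(t)^*)V(t)=0$. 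Then let $U:[0,t_f)\to\mathsf U(nk)$ be the unique locally absolutely continuous solution of~\eqref{eq:schrodinger-int} generated by $H$, whose existence and uniqueness is ensured by the ``mild'' solution theory recalled in Appendix~A. Both $t\mapsto U(t)V(0)$ and $t\mapsto V(t)$ are then locally absolutely continuous solutions of the linear ODE $\dot X(t)=-iH(t)X(t)$ with the same initial value $V(0)$, so by uniqueness $V(t)=U(t)V$ for all $t$, which is the representation in (iii).

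For the supplementary equivalence with (iii') under $\Phi(0)={\rm id}$, observe first that the initial condition forces any Stinespring isometry for $\Phi(0)$ to be of the form $\mathbbm1\otimes|\psi_0\rangle$ for some unit vector $\psi_0\in\mathbb C^k$, as a short computation of the reduced channel confirms. Given any target unit vector $\psi$, choose a unitary $W\in\mathsf U(k)$ with $W\psi_0=\psi$ and replace $V(t)$ by $\tilde V(t):=(\mathbbm1\otimes W)V(t)$; invariance of the partial trace under conjugation by $\mathbbm1\otimes W$ shows that $\tilde V$ still represents $\Phi$, and $\tilde V(0)=\mathbbm1\otimes|\psi\rangle$. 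Feeding $\tilde V$ into the construction above produces closed system dynamics $\tilde U$ satisfying $\tilde U(t)(\mathbbm1\otimes|\psi\rangle)=\tilde V(t)$, which is precisely (iii'); the converse is trivial. The final regularity-transfer claim follows because every step is either algebraic or an application of a linear ODE whose coefficient $H$ directly inherits the regularity of $V$, and standard ODE regularity theorems then lift that same regularity to $U$.

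The principal obstacle I anticipate is pinning down the correct ansatz for $H(t)$: the naive symmetrization $i(\dot V V^*-V\dot V^*)$ satisfies $HV=i\dot V$ only when the range of $V(t)$ is constant in $t$, so the asymmetric correction term $-iV\dot V^*(\mathbbm1-VV^*)$---which vanishes when applied to $V(t)$ but precisely repairs Hermiticity off of the range of $V(t)$---is the essential ingredient that makes the whole argument work.
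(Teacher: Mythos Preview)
Your proof is correct, and for the central implication $(ii)\Rightarrow(iii)$ it is actually more direct than the paper's. The paper proceeds via Kato's projection-transport construction (Lemma~\ref{lemma_app_A}): it solves $\dot W=QW$ with the skew-Hermitian $Q(t)=(\mathbbm1-VV^*)\dot V V^*-V\dot V^*(\mathbbm1-VV^*)$, which tracks the orthogonal complement of ${\rm ran}\,V(t)$ but does \emph{not} satisfy $W(t)V(0)=V(t)$ (indeed $QV=\dot V-VV^*\dot V$, and $V^*\dot V$ is only skew-Hermitian, not zero); the unitary $U(t)$ is then obtained by gluing $V(t)$ together with the last $nk-n$ columns of $W(t)$. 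Your Hamiltonian $H=i\dot VV^*-iV\dot V^*(\mathbbm1-VV^*)$ differs from $iQ$ precisely by the term $iVV^*\dot VV^*$, and this correction is exactly what makes $-iH(t)V(t)=\dot V(t)$ hold, so that the flow of $H$ itself already carries $V(0)$ to $V(t)$ and no gluing is needed. The trade-off is that the paper's route makes the link to Kato's classical perturbation theory explicit and yields an independent description of what happens on $({\rm ran}\,V(t))^\perp$, whereas your route gives a one-line generator and appeals only to uniqueness for linear ODEs. Your handling of $(iii')$ and of regularity transfer is essentially the same as the paper's, just organised slightly differently (you adjust $V(t)$ before invoking $(ii)\Rightarrow(iii)$, the paper adjusts $U(t)$ after).
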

\begin{proof}
``(\ref{lemma_equiv_repr_item_kraus}) $\Rightarrow$ (\ref{lemma_equiv_repr_item_iso})'': For all $t\in[0,t_f)$ define
$
V(t): \mathbb C^n \to \mathbb C^n\otimes\mathbb C^{k} $ via
$ x\mapsto\sum_{j=1}^{k}(K_j(t)x)\otimes |j\rangle\,.
$
Then (\ref{lemma_equiv_repr_item_kraus}) implies $\Phi(t)=  {\rm tr}_{\mathbb C^{k}}(V(t)(\cdot)V(t)^*)$ and  each $V(t)$ is an isometry (because $\sum_{j=1}^kK_j(t)^*K_j(t)=\mathbbm1$, due to $\Phi(t)$ being trace-preserving) for all $t\in[0,t_f)$ \cite[Thm.~6.9 \& Cor.~6.13]{Holevo12}.
As $V(t)$ is linear in each $K_j(t)$, local absolute continuity as well as any type of regularity transfers over from the $K_j$ to $V$.

``(\ref{lemma_equiv_repr_item_iso}) $\Rightarrow$ (\ref{lemma_equiv_repr_item_unitary})'': 
The task, essentially, is to complete $V(t)$ to a unitary $U(t)$ while maintaining the continuity requirement in $t$.
The idea for this is as follows:
Because $t\mapsto V(t)$ is 
locally absolutely continuous,
the ``generator'' $Q(t)=(\mathbbm1-V(t)V(t)^*)\dot V(t) V(t)^*-V(t)\dot V(t)^*(\mathbbm1-V(t)V(t)^*)$ is almost everywhere defined, locally integrable, and skew-Hermitian;
thus the solution to $\dot W(t)=Q(t)W(t)$, $W(0)=\mathbbm1$ exists and unitary.
Then one can show that in some (time-independent) basis the last $nk-n$ columns of $W(t)$ are precisely what completes $V(t)$ to a unitary $U(t)$ while preserving local absolute continuity in $t$.
The precise statement---proven in Appendix~B as Lemma~\ref{lemma_app_A}---reads as follows:
there exists
$U:[0,t_f)\to\mathsf U(nk)$ locally absolutely continuous
such that $V(t)=U(t)V(0)$, $U(0)=\mathbbm1$
for all $t\in[0,t_f)$.
Thus $U$ describes closed system dynamics, and (\ref{lemma_equiv_repr_item_iso})
implies (\ref{lemma_equiv_repr_item_unitary}) when defining $V:=V(0)$.
Moreover, the aforementioned lemma also guarantees that any regularity of $t\mapsto V(t)$ can be carried over to $U$.

``(\ref{lemma_equiv_repr_item_unitary}) $\Rightarrow$ (\ref{lemma_equiv_repr_item_kraus})'': Given any orthonormal basis $\{g_j\}_{j=1}^k$ of $\mathbb C^k$, it is well known that 
$\{\iota_j^*:j=1,\ldots,k\}$
are Kraus operators of 
${\rm tr}_{\mathbb C^k}$ 
 where $\iota_j:\mathbb C^n\to\mathbb C^n\otimes\mathbb C^k$ is defined via $\iota_j(x):= x\otimes g_j$, cf., e.g., \cite[Lemma B.2 \& C.1]{vE22_Stinespring}.
This implies that
$
\Phi(t)= {\rm tr}_{\mathbb C^k}\big(U(t)V(\cdot)V^*U(t)^*\big)=\sum_{j=1}^k\iota_j^*U(t)V(\cdot)V^*U(t)^*\iota_j
$
which shows (\ref{lemma_equiv_repr_item_kraus}) once we define $K_j(t):=\iota_j^*U(t)V\in\mathbb C^{n\times n}$ for all $j=1,\ldots,k$, $t\in[0,t_f)$.
Again, each $K_j$ is linear in $U(t)$ so local absolute continuity
as well as any regularity carries over from $U$ to the $K_j$.

Finally, assume $\Phi(0)={\rm id}$. While ``(\ref{lemma_equiv_repr_item_unitary}) $\Leftarrow$ (iii')'' follows directly from 
the relation $(\cdot)\otimes|\psi\rangle\langle\psi|=\iota_\psi(\cdot)\iota_\psi^*$ (where $\iota_\psi(x):=x\otimes\psi$),
cf.~\cite[Lemma~C.1]{vE22_Stinespring}, for ``(\ref{lemma_equiv_repr_item_unitary}) $\Rightarrow$ (iii')'' 
note that
${\rm tr}_{\mathbb C^k}( V (\cdot) V^* )=\Phi(0)={\rm id}={\rm tr}_{\mathbb C^k}( \iota_\psi (\cdot) \iota_\psi^* )$.
This means that $V$, $\iota_\psi$ are locally unitarily equivalent:  Coro.~2.24 in \cite{Watrous18} yields $W\in\mathsf U(k)$ such that $V=(\mathbbm1\otimes W)\iota_\psi$, i.e.~$Vx=x\otimes W\psi=x\otimes\phi$
for all $x\in\mathbb C^n$
 where $\phi:=W\psi\in\mathbb C^k$. 
Therefore $VA V^*=A\otimes|\phi\rangle\langle\phi|$ for all $A\in\mathbb C^{n\times n}$, that is,
(iii') holds for $\psi=\phi$.
Now if $\psi\in\mathbb C^k$ is an arbitrary unit vector there certainly exists $W'\in\mathsf U(k)$ such that $\phi=W'\psi$;
all we then need is the modification $U(t)\to (\mathbbm1\otimes W')^*U(t)(\mathbbm1\otimes W')$.
Obviously, the latter still describes closed system dynamics and we compute
\begin{align*}
\Phi(t)&={\rm tr}_{\mathbb C^{k}}\big(U(t)((\cdot)\otimes|\phi\rangle\langle\phi|)U(t)^*\big)\\
&={\rm tr}_{\mathbb C^{k}}\big(U(t)((\cdot)\otimes W'|\psi\rangle\langle\psi|W'^*)U(t)^*\big)\\
&={\rm tr}_{\mathbb C^{k}}\big(U(t)(\mathbbm1\otimes W')((\cdot)\otimes|\psi\rangle\langle\psi|)(\mathbbm1\otimes W')^*U(t)^*\big)\\
&={\rm tr}_{\mathbb C^{k}}\big((\mathbbm1\otimes W')^*U(t)(\mathbbm1\otimes W')((\cdot)\otimes|\psi\rangle\langle\psi|)(\mathbbm1\otimes W')^*U(t)^*(\mathbbm1\otimes W')\big)
\end{align*}
for all $t\in[0,t_f)$.
\end{proof}

\begin{remark}\label{rem_cont_ac}
Lemma~\ref{lemma_equiv_repr} continues to hold
if local absolute continuity is replaced by usual continuity. 
The only thing that changes
about the above proof 
is the step where one extends $V(t)$ to a continuous curve of unitaries $U(t)$:
the problem is that if $V(t)$ is only continuous, then we cannot guarantee that the
generator $Q(t)=(\mathbbm1-V(t)V(t)^*)\dot V(t) V(t)^*-V(t)\dot V(t)^*(\mathbbm1-V(t)V(t)^*)$---which was the key to constructing $U(t)$---even exists.
One can circumvent this problem via an old result of Dole\v{z}al \cite{Dolezal64} which essentially guarantees existence of such an extension $U(t)$ in the continuous case (cf.~also Rem.~\ref{rem_cont_ac_app} in Appendix~B), the drawback being that the construction is not as explicit as the one involving $Q(t)$.
However, in light of Def.~\ref{def_closed_sys_dyn} it could be argued that
mere continuity is too weak for describing and modeling physical systems either way.
\end{remark}

\section{Main Result}\label{sec:main}

With Definitions~\ref{def_closed_sys_dyn} \& \ref{def_stinespring_curves} in place we are finally ready to state our main result:

\begin{thm}\label{thm0}
Let $n\in\mathbb N$, $t_f\in(0,\infty]$, as well as $\Phi:[0,t_f)
\to\mathsf{CPTP}(n)$ Lipschitz
with constant $K_\Phi>0$---with respect to, say, the diamond norm\footnote{
so $\Phi$ satisfies $\|\Phi(t_1)-\Phi(t_2)\|_\diamond\leq K_\Phi|t_1-t_2|$ for all $t_1,t_2\in[0,t_f)$
}---be given.
Then for all $\varepsilon>0$ there exists a Stinespring curve $\Phi_\varepsilon$ on $[0,t_f)$ with dilation space $\mathbb C^{n^2}\otimes\mathbb C^2\otimes\mathbb C^2$ such that
$
\|\Phi-\Phi_\varepsilon\|_{\rm sup}<\varepsilon
$.
Moreover, 
\begin{enumerate}[(i)]
\item \label{thm0_item_step} the time-dependent Hamiltonian
$H\in L^1_{\sf loc}([0,t_f),i\mathfrak u(4n^3))$ which
generates the dynamics of the larger 
closed system of $\Phi_\varepsilon$ can be
chosen piecewise constant with $\|H\|_{\rm sup}=\sup_{t\in[0,t_f)}\|H(t)\|_\infty<\infty$.
\item \label{thm0_item_analytic} if $t_f<\infty$, then the Hamiltonian $H:[0,t_f)\to i\mathfrak u(4n^3)$ can even be chosen analytic.
\item \label{thm0_item_aux_pure} the auxiliary state of $\Phi_\varepsilon$ can be chosen pure.
\item \label{thm0_item_csd} if $\Phi(0)={\rm id}$, then the curve of unitaries $t\mapsto U(t)$ which generates $\Phi_\varepsilon$ can be chosen such that $U(0)=\mathbbm1$, i.e.~$U$ describes closed system dynamics.
\end{enumerate}
\end{thm}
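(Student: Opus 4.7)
My plan is to approximate $\Phi$ by a channel $\Phi_\varepsilon$ that, on each piece of a fine enough partition of $[0, t_f)$, is a convex combination of two sampled channels $\Phi(t_k)$ and $\Phi(t_{k+1})$, and to implement this combination by a constant ``rotation'' Hamiltonian per piece acting on an ancilla qubit. The key input is the Lipschitz bound: on a partition of mesh $\delta \leq \varepsilon/(2K_\Phi)$, any convex combination $(1-\lambda)\Phi(t_k) + \lambda \Phi(t_{k+1})$ with $\lambda \in [0,1]$ is within $\varepsilon$ of $\Phi(t)$ in diamond norm throughout $[t_k, t_{k+1}]$.

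\textbf{Construction.} For each $t_k$ fix a Kraus tuple $(K_1^{(k)}, \ldots, K_{n^2}^{(k)})$ of $\Phi(t_k)$ (padded with zeros), with associated Stinespring isometry $V_k : \mathbb{C}^n \to \mathbb{C}^n \otimes \mathbb{C}^{n^2}$, $V_k x = \sum_j (K_j^{(k)} x) \otimes |j\rangle$. Introduce an ancilla qubit $A$ with an orthonormal pair $\{|a_k\rangle, |b_k\rangle\}$ chosen so that $|b_k\rangle = |a_{k+1}\rangle$ (i.e.\ alternate $|0\rangle$ and $|1\rangle$ across consecutive intervals). On $[t_k, t_{k+1}]$ set
\[
W(t)\,x \;=\; \cos\theta_k(t)\, V_k x \otimes |a_k\rangle \;+\; \sin\theta_k(t)\, V_{k+1} x \otimes |b_k\rangle,\quad \theta_k(t) := \tfrac{\pi}{2}\,(t-t_k)/\delta.
\]
Each $W(t)$ is an isometry into $\mathbb{C}^n \otimes \mathbb{C}^{n^2} \otimes \mathbb{C}^2$, its partial trace over the environment equals $\cos^2\theta_k\, \Phi(t_k) + \sin^2\theta_k\, \Phi(t_{k+1})$ (hence is $\varepsilon$-close to $\Phi(t)$), and the alternation choice makes $W$ continuous across interval boundaries. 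Moreover $W(t) = e^{-iH_k(t-t_k)} W(t_k)$ for a constant Hermitian $H_k$ on $\mathbb{C}^n \otimes \mathbb{C}^{n^2} \otimes \mathbb{C}^2$ that acts as a Pauli-$Y$-type generator of frequency $\pi/(2\delta)$ between $V_k(\mathbb{C}^n) \otimes |a_k\rangle$ and $V_{k+1}(\mathbb{C}^n) \otimes |b_k\rangle$ and vanishes on the orthogonal complement, so $\|H_k\|_\infty = \pi/(2\delta)$. Invoking Lemma~\ref{lemma_equiv_repr} via the isometry-to-unitary extension in the appendix then upgrades $W$ to closed system dynamics $U$ on $\mathbb{C}^n \otimes \mathbb{C}^{n^2} \otimes \mathbb{C}^2 \otimes \mathbb{C}^2$, where the additional qubit provides the room needed to realize the dynamics in the Stinespring-curve form with a fixed pure ancilla state $\omega$.

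\textbf{Extra properties and the main obstacle.} Piecewise constancy and uniform boundedness of $H$ (item (i)) are built in, with $\|H\|_{\rm sup} \leq \pi/(2\delta)$. Item (iii) holds by construction since $\omega$ is pure. For item (iv), if $\Phi(0) = {\rm id}$ then the initial isometry $V(0)$ can be chosen of the form $x \mapsto x \otimes |\psi\rangle$, which together with part (iii') of Lemma~\ref{lemma_equiv_repr} forces $U(0) = \mathbbm{1}$. For (ii), on a finite interval $t_f < \infty$ I would approximate the piecewise constant $H$ by an analytic function (e.g., via Gaussian mollification after an appropriate extension past $t_f$) and apply Gronwall's inequality to convert $L^1$-smallness of the perturbation on $H$ into sup-norm smallness of the perturbation on the reduced dynamics. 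The main conceptual obstacle is keeping $W$ continuous across interval boundaries: naively the ``active'' Kraus block at $t_{k+1}^-$ lives in a different ancilla slot than the one needed at $t_{k+1}^+$, and the alternation device on the interpolation qubit is precisely what fixes this; together with the pure-state isometry completion, this is what accounts for the two-qubit ancilla factor in the dilation space.
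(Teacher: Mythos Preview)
Your approach is genuinely different from the paper's, and in some ways cleaner. The paper samples $\Phi$ at times $j\delta$, uses the Kretschmann--Schlingemann--Werner continuity bound to align consecutive Stinespring isometries (this is where the factor $2$ in $\mathbb C^{2n^2}$ comes from), dilates each to a unitary via a Sz.-Nagy-type block matrix (the other qubit), and then connects the discrete unitaries by piecewise-constant Hamiltonians whose norms are controlled via the matrix-logarithm Lipschitz estimate. The error analysis there compares two Lipschitz curves that agree on the grid, yielding an $O(\sqrt{\delta})$ bound. By contrast, your convex-interpolation-via-qubit idea bypasses the KSW continuity result entirely and gives an $O(\delta)$ error directly; moreover, if carried out carefully it only needs environment $\mathbb C^{n^2}\otimes\mathbb C^2$, not $\mathbb C^{n^2}\otimes\mathbb C^2\otimes\mathbb C^2$.

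There is one real gap in your write-up, though. You pass from the isometry curve $W(t)$ to a unitary curve by ``invoking Lemma~\ref{lemma_equiv_repr} via the isometry-to-unitary extension in the appendix,'' and then claim the resulting Hamiltonian is piecewise constant. But the appendix construction produces a generator $Q(t)=(\mathbbm1-WW^*)\dot W W^*-W\dot W^*(\mathbbm1-WW^*)$, which depends on $W(t)W(t)^*$ and is \emph{not} piecewise constant in your setting (since the range projection of $W(t)$ rotates). The fix is simpler than what you wrote: your $H_k$ is already a Hermitian operator on the full space $\mathbb C^n\otimes\mathbb C^{n^2}\otimes\mathbb C^2$, so just let $U(t)$ solve $\dot U=-iH(t)U$ with $H(t)=H_k$ on $[t_k,t_{k+1})$ and initial condition $U(0)=U_0$, where $U_0$ is any unitary satisfying $U_0(x\otimes|\psi\rangle)=W(0)x$ for some fixed unit vector $|\psi\rangle\in\mathbb C^{2n^2}$ (such $U_0$ exists because both sides are isometries $\mathbb C^n\to\mathbb C^{2n^3}$). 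Then $U(t)(x\otimes|\psi\rangle)=W(t)x$, so $\Phi_\varepsilon(t)=\mathrm{tr}_{\mathbb C^{2n^2}}(U(t)((\cdot)\otimes|\psi\rangle\langle\psi|)U(t)^*)$ is a Stinespring curve with pure ancilla and piecewise constant Hamiltonian of norm $\pi/(2\delta)$. In particular, no second ancilla qubit is needed for this step; your explanation of its role is off. (If $\Phi(0)=\mathrm{id}$ one may take $U_0=\mathbbm1$, giving item~(iv).) Your treatment of item~(ii) via $L^1$-approximation plus a Gr\"onwall/Duhamel argument matches the paper's.
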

\begin{proof}
Given any $\varepsilon>0$, the idea for constructing $\Phi_\varepsilon$ is to 
consider a sufficiently discretized version of $\Phi$ and to ``connect the dots'' appropriately.
More precisely, choose $\delta\in(0,\min\{1,t_f\})$ such that 
$\delta<\varepsilon^2(K_\Phi+4\pi\sqrt{K_\Phi})^{-2}$, and
consider
$\{\Phi(j\delta):j\in\mathbb N_0,j<M_\delta\}$ where $M_\delta:=\lceil \frac{t_f}{\delta}\rceil$.
Next, for each $j<M_\delta$ pick an arbitrary Stinespring isometry $V_j:\mathbb C^n\to\mathbb C^n\otimes\mathbb C^{2n^2}$ of $\Phi(j\delta)$, i.e.~$\Phi(j\delta)={\rm tr}_{\mathbb C^{2n^2}}(V_j(\cdot)V_j^*)$.
The key to our construction is 
the following continuity result for Stinespring dilations:
for any $\Psi_1,\Psi_2\in\mathsf{CPTP}(n)$ and any respective Stinespring isometries $Z_1,Z_2:\mathbb C^n\to\mathbb C^n\otimes\mathbb C^{2n^2}$ of $\Psi_1,\Psi_2$, there exists $W\in\mathsf U(2n^2)$ such that\footnote{
Recall that $(\mathbbm1\otimes W)Z_2$ is again a Stinespring isometry, regardless of $W\in\mathsf U(2n^2)$.
}
$\|Z_1-(\mathbbm1\otimes W)Z_2\|_\infty\leq\sqrt{\| \Psi_1-\Psi_2 \|_\diamond}$, cf.~\cite[Thm.~1]{Kretschmann08}, \cite[Prop.~5]{DAriano07}, or \cite[Prop.~1]{vomEnde_KSW_23}.
Starting from $W_0:=\mathbbm1$,
an inductive application of this result to $\Phi(j\delta)$, $\Phi((j+1)\delta)$
yields $W_{j+1}\in\mathsf U(2n^2)$ such that
\begin{equation}\label{eq:dist_U_upperbound}
\|(\mathbbm1\otimes W_j)V_j-(\mathbbm1\otimes W_{j+1})V_{j+1}\|_\infty\leq \sqrt{\|\Phi(j\delta)-\Phi((j+1)\delta)\|_\diamond}\,.
\end{equation}
for all $j\in\mathbb N_0$, $j<M_\delta$.
In other words this gives rise to a new family of Stinespring isometries $\{V_0,(\mathbbm1\otimes W_1)V_1,(\mathbbm1\otimes W_2)V_2,\ldots\}$ of $\{\Phi(0),\Phi(\delta),\Phi(2\delta),\ldots\}$, respectively, where consecutive elements of the former set are as close together as the (square root of the) corresponding channel distance.
Next, in order to connect all $\Phi(j\delta)$ via a Stinespring curve we have to extend each Stinespring isometry $(\mathbbm1\otimes W_j)V_j$ to a Stinespring unitary.
For this we define\footnote{
More precisely in~\eqref{eq:def_Uj_SzNagy}, $U_j\in\mathcal L(\mathbb C^n\otimes\mathbb C^{2n^2}\otimes\mathbb C^2)$
is defined via
$$
U_j:=J\circ\begin{pmatrix}
(\mathbbm1\otimes W_j)V_j\iota^*&(\mathbbm1\otimes W_j)(\mathbbm1-V_jV_j^*)(\mathbbm1\otimes W_j)^*\\
\mathbbm1-\iota\iota^*&-\iota V_j^*(\mathbbm1\otimes W_j)^*
\end{pmatrix}\circ J^{-1}
$$
where 
$\circ$ is the usual composition of maps, and
$J:(\mathbb C^n\otimes\mathbb C^{2n^2})\times(\mathbb C^n\otimes\mathbb C^{2n^2})\to\mathbb C^n\otimes\mathbb C^{2n^2}\otimes\mathbb C^2$ is the isometric isomorphism
$J(x,y):=x\otimes|0\rangle+y\otimes |1\rangle$.
However, in abuse of notation we will use block matrices (e.g., in~\eqref{eq:def_Uj_SzNagy}) and the corresponding (induced) operator on the tensor product interchangeably.
}
\begin{equation}\label{eq:def_Uj_SzNagy}
U_j:=\begin{pmatrix}
(\mathbbm1\otimes W_j)V_j\iota^*&(\mathbbm1\otimes W_j)(\mathbbm1-V_jV_j^*)(\mathbbm1\otimes W_j)^*\\
\mathbbm1-\iota\iota^*&-\iota V_j^*(\mathbbm1\otimes W_j)^*
\end{pmatrix}\in\mathbb C^{4n^3\times 4n^3}\,,
\end{equation}
where $\iota:\mathbb C^n\to\mathbb C^n\otimes\mathbb C^{2n^2}$ is the injective embedding $\iota(x):=x\otimes|0\rangle$.
In particular, $\iota$ is an isometry (i.e.~$\iota^*\iota=\mathbbm1$), which---after a straightforward computation---implies that $U_j$ is unitary.
Moreover, this 
$U_j$ for all $x\in\mathbb C^n$ satisfies $U_j(x\otimes|0\rangle\otimes|0\rangle)=((\mathbbm1\otimes W_j)V_jx)\otimes|0\rangle$
because of the following identity: $(\mathbbm1-\iota\iota^*)(x\otimes|0\rangle)=(\mathbbm1-\mathbbm1\otimes|0\rangle\langle 0|)(x\otimes|0\rangle)=0$.
Altogether, this shows that $U_j$ is
a Stinespring unitary of $\Phi(j\delta)$ w.r.t.~the environment state
$|0\rangle\langle 0|_{4n^2}=|0\rangle\langle 0|\otimes|0\rangle\langle 0|\in\mathbb C^{2n^2\times 2n^2}\otimes\mathbb C^{2\times 2}$:
\begin{align*}
{\rm tr}_{\mathbb C^{4n^2}}(&U_j((\cdot)\otimes|0\rangle\langle 0|\otimes|0\rangle\langle 0|)U_j^*)\\
&=
{\rm tr}_{\mathbb C^{4n^2}}((\mathbbm1\otimes W_j)V_j(\cdot)V_j^*(\mathbbm1\otimes W_j)^*\otimes|0\rangle\langle 0|)\\
&=
{\rm tr}_{\mathbb C^{2n^2}}((\mathbbm1\otimes W_j)V_j(\cdot)V_j^*(\mathbbm1\otimes W_j)^*)={\rm tr}_{\mathbb C^{2n^2}}(V_j(\cdot)V_j^*)=\Phi(j\delta)
\end{align*}
Now that we have a discrete set $\{U_j:j\in\mathbb N_0,j<M_\delta\}$
of Stinespring unitaries
we have to turn it into a curve of unitaries $U(t)$; however, not any such curve will do as we have to be careful that the norm of the generating Hamiltonian 
does not become ``too large''.
More precisely, for all $j=1,\ldots,M_\delta-1$ there exists $H_j\in i\mathfrak u(4n^3)$ such that $U_{j}U_{j-1}^*=e^{iH_j}$ and $2\|H_j\|_\infty\leq \pi\|U_{j}U_{j-1}^*-\mathbbm1\|_\infty$; the existence of such $H_j$ is proven in Lemma~\ref{lemma_app_C_unitary_gen}~(ii) in Appendix~C.
With this, define a piecewise constant path
$
{\sf H}:[0,t_f)\to i\mathfrak u(4n^3)
$
via 
$
{\sf H}(t):=-\delta^{-1} H_j
$
for all $t\in[(j-1)\delta,j\delta)$, $j<M_\delta$ and $0$ else.
Note that, obviously, $\mathsf H$ is locally integrable and Hermitian at all times so
$\dot U(t)=-i{\sf H}(t)U(t)$, $U(0)=U_0$
has a (unique, locally absolutely continuous) solution
$U:[0,t_f)\to\mathsf U(4n^3)$.
As desired, $U(t)$ ``connects'' the discrete Stinespring unitaries $U_j$:
\begin{align*}
U(j\delta)&=\Big(\prod_{\alpha=1}^j e^{-i[\alpha\delta-(\alpha-1)\delta]{\sf H}((\alpha-1)\delta)}\Big)U(0)\\
&=\Big(\prod_{\alpha=1}^j e^{iH_\alpha}\Big)U_0
=\Big(\prod_{\alpha=1}^j U_\alpha U_{\alpha-1}^*\Big)U_0=U_j
\end{align*}
for all $j<M_\delta$.
With this we define $\Phi_\varepsilon:[0,t_f)\to\mathsf{CPTP}(n)$ via
$$
\Phi_\varepsilon(t):= {\rm tr}_{\mathbb C^{4n^2}}\big(U(t)((\cdot)\otimes|0\rangle\langle 0|)U(t)^*\big)
$$
for all $t\in[0,t_f)$, i.e.~$\Phi_\varepsilon$ is a time-dependent Stinespring curve and (\ref{thm0_item_aux_pure}) holds.
We claim that this $\Phi_\varepsilon$ is the Stinespring curve we were looking for, i.e.~that $
\|\Phi-\Phi_\varepsilon\|_{\rm sup}<\varepsilon$.
For this the idea is as follows: $\Phi$, $\Phi_\varepsilon$ coincide at times $t=0,\delta,2\delta,\ldots$ so $\|\Phi-\Phi_\varepsilon\|_{\rm sup}$ 
can be upper bounded by the product of the step-size $\Delta t=\delta$ and the sum of any Lipschitz constants of $\Phi$, $\Phi_\varepsilon$.
Intuitively, the Lipschitz constants determine how much $\Phi_\varepsilon$ can deviate from $\Phi$ between $t=(j-1)\delta$ and $t=j\delta$.
Thus, the first step is to check that $\Phi_\varepsilon$ is 
Lipschitz.
Let $j<M_\delta$, $t\in((j-1)\delta,j\delta)$ be given.
Then $U$ is differentiable with $\dot U(t)=i\delta^{-1}H_jU(t)$ which implies that $\Phi_\varepsilon$ is differentiable at $t$ with
\begin{align*}
\dot\Phi_\varepsilon(t)=i\delta^{-1}{\rm tr}_{\mathbb C^{4n^2}}\big(\big[H_j,U(t)\big((\cdot)\otimes|0\rangle\langle 0|\big)U(t)^*\big]\big)\,,
\end{align*}
i.e.~$\dot\Phi_\varepsilon(t)=i\delta^{-1}{\rm tr}_{\mathbb C^{4n^2}}\circ [H_j,\cdot]\circ {\rm Ad}_{U(t)}\circ \iota_{|0\rangle\langle 0|}$ where, again, $\circ$ is the usual composition of maps, ${\rm Ad}_U:=U(\cdot)U^*$ for any $U\in\mathsf U(n)$, and $\iota_{\omega}:=(\cdot)\otimes\omega$
for any $\omega\in\mathbb C^{n\times n}$.
Now the identity $\|\Psi_1\circ\Psi_2\|_\diamond\leq\|\Psi_1\|_\diamond\|\Psi_2\|_\diamond$ for all $\Psi_1,\Psi_2\in\mathcal L(\mathbb C^{n\times n})$ \cite[Prop.~3.48]{Watrous18}
together with the well-known fact that every $\mathsf{CPTP}$ map is $\|\cdot\|_{\diamond}$-contractive
\cite[Prop.~3.44]{Watrous18}
lets us compute
\begin{align*}
\|\dot\Phi_\varepsilon(t)\|_{\diamond}\leq\delta^{-1}\|[H_j,\,\cdot\,]\|_{\diamond}&=\delta^{-1}\|[H_j,\,\cdot\,]\otimes{\rm id}_n\|_{1\to 1}\\
&=\delta^{-1}\|[H_j\otimes\mathbbm1,\,\cdot\,]\|_{1\to 1}\\
&\leq 2\delta^{-1}\|H_j\otimes\mathbbm 1\|_\infty=2\delta^{-1}\|H_j\|_\infty\,.
\end{align*}
Because $\Phi_\varepsilon$ is continuous and differentiable almost everywhere, 
the largest value of the norm of the derivative is a Lipschitz constant for $\Phi_\varepsilon$ (cf.~Lemma~\ref{lemma_A2} in Appendix~C) 
meaning 
$\Phi_\varepsilon$ is Lipschitz-continuous with Lipschitz constant
\begin{align*}
\sup_{j\in\mathbb N,j<M_\delta} 2\delta^{-1}\|H_j\|_\infty
&\overset{\hphantom{\eqref{eq:dist_U_upperbound}}}\leq\sup_{j\in\mathbb N,j<M_\delta} \pi\delta^{-1}\|U_j-U_{j-1}\|_\infty\\
&\overset{\hphantom{\eqref{eq:dist_U_upperbound}}}\leq4\pi\delta^{-1}\sup_{j\in\mathbb N,j<M_\delta} \|(\mathbbm1\otimes W_j)V_j-(\mathbbm1\otimes W_{j+1})V_{j+1}\|_\infty\\
&\overset{\eqref{eq:dist_U_upperbound}}\leq4\pi\delta^{-1}\sup_{j\in\mathbb N,j<M_\delta}  \sqrt{\|\Phi(j\delta)-\Phi((j-1)\delta)\|_\diamond}\leq4\pi\sqrt{\tfrac{K_\Phi}\delta}
\end{align*}
Here, in the second step we used 
$
\|U_j-U_k\|_\infty\leq4\|(\mathbbm1\otimes W_j)V_j-(\mathbbm1\otimes W_k)V_k\|_\infty
$
for all $j,k<M_\delta$---which follows from the definition~\eqref{eq:def_Uj_SzNagy} of $U_j$ via a straightforward computation---and in the last step we used Lipschitz-continuity of $\Phi$.
Either way we now know that 
both $\Phi,\Phi_\varepsilon$ are Lipschitz-continuous on $[0,t_f)$ and they---by
construction---for all $j<M_\delta$ satisfy
\begin{align*}
\Phi_\varepsilon(j\delta)&={\rm tr}_{\mathbb C^{4n^2}}(U(j\delta)((\cdot)\otimes|0\rangle\langle 0|)U(j\delta)^*)\\
&= {\rm tr}_{\mathbb C^{4n^2}}\big(U_j((\cdot)\otimes|0\rangle\langle 0|)U_j\big)=\Phi(j\delta)\,.
\end{align*}
As stated previously, this lets us upper bound $\|\Phi-\Phi_\varepsilon\|_{\rm sup}$ by $\delta$
times the sum of the Lipschitz constants (cf.~Lemma~\ref{lemma_A1} in Appendix~B for precise statement and proof).
Recalling that $\delta<1$ was chosen less than $\varepsilon^2(K_\Phi+4\pi\sqrt{K_\Phi})^{-2}$,
\begin{align*}
\|\Phi-\Phi_\varepsilon\|_{\rm sup}&\leq \delta(K_\Phi+4\pi\delta^{-1/2}\sqrt{K_\Phi})
\leq \sqrt{\delta}(K_\Phi+4\pi\sqrt{K_\Phi})<\varepsilon\,.
\end{align*}

Now let us quickly reflect on which statements we have yet to prove.
We constructed a Stinespring curve $\Phi_\varepsilon$ with pure auxiliary state $|0\rangle\langle 0|$ (statement~\eqref{thm0_item_aux_pure}) which is $\varepsilon$-close to $\Phi$, and the time-dependent Hamiltonian $\mathsf H$
that generates the larger unitary dynamics of $\Phi_\varepsilon$ is piecewise constant with $\|\mathsf H\|_{\rm sup}=\sup_{j\in\mathbb N,j<M_\delta} \delta^{-1}\|H_j\|_\infty\leq 2\pi\sqrt{K_\Phi\delta^{-1}}<\infty$ (statement~\eqref{thm0_item_step}).
Hence (\ref{thm0_item_analytic}) and (\ref{thm0_item_csd}) are still open.

The key to proving (\ref{thm0_item_analytic})---i.e.~if $t_f<\infty$, then $\mathsf H$ can be chosen analytic---is, unsurprisingly, the fact that polynomials are dense in $(L^1([0,t_f]),\|\cdot\|_1)$ (density of the continuous functions in $L^1$ \cite[Thm.~3.14]{Rudin86} together with the fact that polynomials are dense in the continuous functions by Stone-Weierstrass).
Starting from $t_f>0$, a Lipschitz function $\Phi$, as well as $\varepsilon>0$, (\ref{thm0_item_step}) yields $\mathsf H:[0,t_f]\to i\mathfrak u(4n^3)$ piecewise constant and uniformly bounded such that the induced Stinespring curve $\Phi_\varepsilon$ is $\frac{\varepsilon}{2}$-close to $\Phi$ in sup-norm.
To construct an analytic Hamiltonian $\tilde{\mathsf H}$ which approximates $\mathsf H$ we will simply approximate the matrix elements of $\mathsf H$.
Indeed, define $H_{jj}:[0,t_f]\to \mathbb R$ for all $j=1,\ldots,4n^3$ via $H_{jj}(t):=\langle j|\mathsf H(t)|j\rangle$, and define $H_{jk}:[0,t_f]\to\mathbb C$ 
for all $1\leq j<k\leq 4n^3$ via $H_{jk}(t):=\langle j|\mathsf H(t)|k\rangle$. By the above density argument there exist polynomials $\{\tilde H_{jj}:[0,t_f]\to\mathbb R\,:\,j=1,\ldots,4n^3\}$, $\{\tilde H_{jk}:[0,t_f]\to\mathbb C\,:\,1\leq j<k\leq 4n^3\}$ such that $\|H_{jk}-\tilde H_{jk}\|_{1}<\tfrac{\varepsilon}{64n^6}$ for all $j\leq k$. This lets us define the analytic Hamiltonian
\begin{align*}
\tilde{\mathsf{H}}:[0,t_f]&\to i\mathfrak u(4n^3)\\
t&\mapsto \sum_{j=1}^{4n^3}\tilde H_{jj}(t)|j\rangle\langle j|+\sum_{j,k=1,j< k}^{4n^3}\big(\tilde H_{jk}(t)|j\rangle\langle k|+\tilde H_{jk}^*(t)|k\rangle\langle j|\big)\,.
\end{align*}
A crude estimate shows $\|\mathsf H-\tilde{\mathsf{H}}\|_{1}
<\tfrac{(4n^3)^2\varepsilon}{64n^6}=\tfrac{\varepsilon}4$.
This carries over to the solutions of the corresponding differential equations $\dot U(t)=-i{\sf H}(t)U(t)$, $U(0)=U_0$ and $\dot {\tilde U}(t)=-i{\tilde{\sf H}}(t)\tilde U(t)$, $\tilde U(0)=U_0$ via the identity
\begin{equation}\label{eq:duhamel_U}
U(t)-\tilde U(t)=i\tilde U(t)\int_0^t\tilde U(\tau)^*(\tilde{\mathsf H}(\tau)-\mathsf H(\tau)) U(\tau)\,{\rm d}\tau
\end{equation}
which holds for all $t\in[0,t_f)$ \cite[Ch.~1, Thm.~5.1 \& Ch.~1.8]{DF84}.
Together with the fact that unitary matrices have operator norm $1$, Eq.~\eqref{eq:duhamel_U} implies
$\|U-\tilde U\|_{\rm sup}\leq \|\mathsf H-\tilde{\mathsf H}\|_1$.
Altogether, 
$\tilde\Phi_\varepsilon(t):= {\rm tr}_{\mathbb C^{4n^2}}(\tilde U(t)((\cdot)\otimes|0\rangle\langle 0|)\tilde U(t)^*) $, $t\in [0,t_f]$ 
is an analytic Stinespring curve which satisfies
\begin{align*}
 \|\Phi-\tilde\Phi_\varepsilon\|_{\rm sup}&\leq  \|\Phi-\Phi_\varepsilon\|_{\rm sup}+ \|\Phi_\varepsilon-\tilde\Phi_\varepsilon\|_{\rm sup}\\
&<\tfrac{\varepsilon}{2}+\big\| {\rm tr}_{\mathbb C^{4n^2}}((U-\tilde U)((\cdot)\otimes|0\rangle\langle 0|) U^*) \big\|_{\rm sup}\\
&\hphantom{<\tfrac{\varepsilon}{2}}\;+ \big\| {\rm tr}_{\mathbb C^{4n^2}}(\tilde U((\cdot)\otimes|0\rangle\langle 0|)(U-\tilde U)^*)  \big\|_{\rm sup}\\
&\leq\tfrac{\varepsilon}2+2\|U-\tilde U\|_{\rm sup}\leq \tfrac{\varepsilon}2+2 \|\mathsf H-\tilde{\mathsf H}\|_1<  \varepsilon
  \,;
\end{align*}
this shows (\ref{thm0_item_analytic}).

Finally let us prove (\ref{thm0_item_csd}), i.e.~if $\Phi(0)={\rm id}$, then $U$ can be chosen such that $U(0)=\mathbbm1$.
As in the proof of Lemma~\ref{lemma_equiv_repr} (\ref{lemma_equiv_repr_item_unitary}) $\Rightarrow$ (iii'),
because the Stinespring isometry $V_0$ satisfies ${\rm tr}_{\mathbb C^{2n^2}}(V_0(\cdot)V_0^*)=\Phi(0)={\rm id}={\rm tr}_{\mathbb C^{2n^2}}((\cdot)\otimes|0\rangle\langle 0|)$
there exists $Z\in\mathsf U(2n^2)$ such that
$V_0x=x\otimes Z|0\rangle$ \cite[Coro.~2.24]{Watrous18};
in other words $V_0x=x\otimes\psi$ 
for all $x\in\mathbb C^n$ where $\psi:=Z|0\rangle\in\mathbb C^{2n^2}$.
Then $U_0$ from~\eqref{eq:def_Uj_SzNagy} simplifies considerably:
because $W_0=\mathbbm1$, $V_0=(\mathbbm1\otimes Z)\iota$, and $\iota\iota^*=\mathbbm1\otimes|0\rangle\langle 0|$ a straightforward computation shows that $U_0=\mathbbm1\otimes U_A$ where
$$
U_A=\begin{pmatrix}
|\psi\rangle\langle 0|&\mathbbm1-|\psi\rangle\langle\psi|\\
\mathbbm1-|0\rangle\langle 0|&-|0\rangle\langle\psi|
\end{pmatrix}\in\mathsf U(4n^2)\,.
$$
With this we define a new curve
$\tilde U:[0,t_f)\to\mathsf U(4n^3)$
via 
$\tilde U(t):=U(t)(\mathbbm 1\otimes U_A^*)$
with $U(t)$ from above.
Also we define
$\Phi_\varepsilon:[0,t_f)\to\mathsf{CPTP}(n)$ via
\begin{align*}
\Phi_\varepsilon(t):= \;&{\rm tr}_{\mathbb C^{4n^2}}(\tilde U(t)((\cdot)\otimes|\psi\rangle\langle \psi|\otimes|0\rangle\langle 0|)\tilde U(t)^*)\\
=\;&{\rm tr}_{\mathbb C^{4n^2}}\big( U(t)\big((\cdot)\otimes U_A^*(|\psi\rangle\langle \psi|\otimes|0\rangle\langle 0|) U_A\big) U(t)^*\big)\\
=\;&{\rm tr}_{\mathbb C^{4n^2}}(U(t)((\cdot)\otimes|0\rangle\langle 0|\otimes|0\rangle\langle 0|)U(t)^*)
\end{align*}
for all $t\in[0,t_f)$.
The last rearrangement implies that $\Phi_\varepsilon$ is $\varepsilon$-close to $\Phi$ in ${\rm sup}$-norm.
Moreover, $\tilde U$ is locally absolutely continuous with $\tilde U(0)=\mathbbm1$ (i.e.~$\tilde U$ describes closed system dynamics).
This concludes the proof.
\end{proof}
\noindent We emphasize that, in the case of (\ref{thm0_item_csd}) (i.e.~if $\Phi(0)={\rm id}$),
statements (\ref{thm0_item_step}) through (\ref{thm0_item_aux_pure}) continue to hold, and that if the domain of $\Phi$ is a compact time interval, then Lipschitz continuity can be relaxed to local Lipschitz continuity, cf.~\cite[Ch.~1.4]{Burago01} \& \cite[p.~93]{Kumaresan05}.\medskip

We conclude this section with some remarks on the dimension of the auxiliary system in Thm.~\ref{thm0}:
\begin{remark}
\begin{itemize}
\item[(i)] A direct consequence of our proof is that---just like in the static case---the $n^2$ in the environment dimension can be lowered to the largest Kraus rank of $\Phi(t)$ taken over all $t$.
\item[(ii)] One additional factor of two comes from our choice of dilation~\eqref{eq:def_Uj_SzNagy}.
This extra environment qubit can probably be avoided by using the construction from Appendix~B where a dynamic isometry is turned to something unitary by adding suitable columns (and not via a larger block matrix).
However, taking this route makes it is more difficult to keep track of the error from~\eqref{eq:dist_U_upperbound}.
\item[(iii)] The other additional factor of two comes from the continuity result for Stinespring isometries we used.
Getting rid of this extra qubit would amount to improving said result to hold for all $m\geq n^2$ (instead of, currently, just all $m\geq 2n^2$), probably at the cost of enlarging the right-hand side of~\eqref{eq:dist_U_upperbound} by a factor or $\sqrt2$, refer also to \cite{vomEnde_KSW_23}.
\end{itemize}
\end{remark}

Moreover, combining our main theorem with Lemma~\ref{lemma_equiv_repr}
shows that for all quantum dynamics there exist at most $4n^2$ dynamic (locally absolutely continuous, and analytic for bounded time intervals) Kraus operators which uniformly approximate the dynamics in question.
\section{Outlook}
In this paper, we tackled the question of whether the dynamics of any quantum system admit a continuous (resp.~sufficiently regular) curve of Stinespring unitaries.
Or, reformulating this physically: can the evolution of an open system always be lifted to dynamics of a larger closed system while adding only finitely many degrees of freedom?
Answering this question---ideally in a constructive way---could 
simplify simulating, and, in general, studying open systems, cf.~also \cite{Dive15}.

Our contribution to this problem is the result that, given any Lipschitz continuous dynamics such time-dependent Stinespring unitaries exist if one allows for arbitrarily small errors in the supremum norm.
Moreover, for finite times these unitaries can even be chosen analytic.
One way to interpret our main result is that there is no (substantial) ``gap'' between
general quantum dynamics and time-dependent Stinespring curves
(resp.~dynamic Kraus operators).
Taking this perspective comes with a number of follow-up questions---even beyond the obvious one which is whether
uniform approximation can be replaced by exactness.
\begin{itemize}
\item What is the role of time-dependence of the system-environment Hamiltonian?
More precisely, given\footnote{
We stress that boundedness of the interval is a necessary assumption.
This is a direct
consequence of the quantum recurrence theorem \cite{Bocchieri57,Schulman78} (cf.~also \cite{wallace2015recurrence,keyl18InfLie}), i.e.~every finite-dimensional
time-\textit{independent} Stinespring curve eventually revisits the identity (at least approximately), but a generic dynamic process does not do that.
}
$\Phi:[0,t_f]\to\mathsf{CPTP}(n)$, $t_f>0$ locally Lipschitz with $\Phi(0)={\rm id}$, can
$\Phi$ be approximated by a (finite-dimensional) time-\textit{independent} Stinespring curve to arbitrary degree?
\end{itemize}
\noindent At first glance, this may appear simple: first apply Thm.~\ref{thm0} to get an approximation $\Phi_\varepsilon$ generated by (analytic) closed system dynamics $U(t)$.
Then---motivated by \cite{Burgarth22} where an infinite-dimensional clock was used to construct an autonomous unitary dilation of a dynamical semigroup consistent with dynamical decoupling---given a sufficiently small step size $\delta$, collect $U(0)$, $U(\delta)$, $U(2\delta),$ etc., into a larger unitary via a finite clock system.
More precisely, define $
W:=\sum_{j=1}^\ell U(j\delta)U((j-1)\delta )^*\otimes|e_{j+1}\rangle\langle e_j|
$
to recover the curve $\Phi_\varepsilon$ at each time-step $j\delta$ after tracing out:
$$
\Phi_\varepsilon(j\delta)=
{\rm tr}_E\big(W^j((\cdot)\otimes|0\rangle\langle 0|\otimes|0\rangle\langle 0|)(W^*)^j\big)\,.
$$
When writing $W$ as $e^{-iH}$ for suitable $H$ this yields
a time-independent Stinespring curve 
$\Phi_\varepsilon'(t):={\rm tr}_E(e^{-iHt}((\cdot)\otimes|0\rangle\langle 0|\otimes|0\rangle\langle 0|)e^{iHt})$;
however,
it is not obvious whether the error $\|\Phi_\varepsilon(t)-\Phi_\varepsilon'(t)\|_{\diamond}$ ``produced'' between $j\delta$ and $(j+1)\delta$ (more precisely: the Lipschitz constant of $\Phi_\varepsilon'$)
is of order $o(\delta^{-1})$.
This would be necessary for the proof of Thm.~\ref{thm0} to generalize to this setting as
``the'' Lipschitz constant of
$\Phi_\varepsilon'$ is decided by $\|H\|_\infty$.
However, this is anything but straightforward as the added clock makes $W$ non-block-diagonal.
\begin{itemize}
\item Can the exactness result of Dive et al.~\cite{Dive15} be generalized to time-\textit{dependent} Markovian dynamics? While their result covers the analytic case as well as the case
where one takes \textit{finite} products of time-independent Markovian maps, taking the closure over the latter \cite{Wolf08a} is what makes the step from time-independent to time-dependent non-trivial.
\end{itemize}
This question is also inspired by the idea of a ``Markovian
shell''
\cite{Garraway97,KR09,CK10,JPB_decoh,CK16,TSHP18,COK23}
where one wants
to express arbitrary, but fixed dynamics $\Phi$
via
$
\Phi(t)= \big({\rm tr}_{\mathbb C^{m}}\circ\Psi_t\big)((\cdot)\otimes\omega)
$,
for some $m\in\mathbb N$, $\omega\in\mathbb D(\mathbb C^{m})$, as well as $\Psi:[0,t_f)\to\mathsf{CPTP}(mn)$ time-dependent Markovian\footnote{
Recall that a map $\Psi:I\to\mathsf{CPTP}(n)$ is called time-dependent Markovian if it is the solution to a time-local master equation $\dot\Psi(t)=L(t)\Psi(t)$, $\Psi(0)={\rm id}$ (resp.~the corresponding integral equation) for some $L:[0,t_f)\to\mathcal L(\mathbb C^{n\times n})$ locally integrable with values in the \textsc{gksl}-generators, refer to \cite{Wolf08a,DHKS08,OSID_thermal_res}.
}.
For convenience let us call objects of the form $({\rm tr}_{\mathbb C^{m}}\circ\Psi_t)((\cdot)\otimes\omega)$ \textit{Markov-Stinespring curves}.
Such a construction can be seen as an interpolation between arbitrary dynamics and Stinespring curves (in the sense of Def.~\ref{def_stinespring_curves}): while one may have to add ``many''  environmental degrees of freedom to turn a system of interest into something closed,
it may only take a few extra dimensions
to turn the system into something Markovian,
cf.~also \cite{Breuer04}.
Indeed, proving the above generalization of \cite{Dive15} would imply that the \textit{set} of Markov-Stinespring curves coincides with the set of (finite-dimensional) Stinespring curves.
\section*{Acknowledgments}
I am grateful to
Philippe Faist whose question sparked this paper, to Gunther Dirr for
making me aware of the paper of Dole\v{z}al \cite{Dolezal64}  (cf.~Remark~\ref{rem_cont_ac_app}, Appendix~B), as well as to
Emanuel Malvetti and Fereshte Shahbeigi
for valuable and constructive comments during the preparation of this paper.
Moreover, I would like to thank Sumeet Khatri for proofreading a preliminary version of this manuscript,
as well as the anonymous referee for their helpful comments which led to a substantially improved presentation of the material.
This work has been supported by the Einstein Foundation (Einstein Research Unit on Quantum Devices) and the MATH+ Cluster of Excellence.

\section*{Appendix A: Initial Value Problems and Their Solutions}

Because we are dealing with a generalized notion of differential equations
we should be particularly careful when it comes to the underlying mathematical formalism.
For this we mainly follow \cite[Appendix~C]{Sontag}.
The upshot of this appendix can be found at the end of this section, refer to Lemma~\ref{app_A_prop_1}.
Given any mapping
$f:I\to V$ where 
$I$ is any connected subset of $\mathbb R$ (called \textit{interval}) 
and $V$ is any finite-dimensional normed space
one calls $f$
\begin{itemize}
\item \textit{locally absolutely continuous} if the restriction $f|_K$ to each compact interval $K\subseteq I$ is absolutely continuous, i.e.~for every such $K$ and every $\varepsilon>0$ there exists $\delta>0$ such that
for every finite sequence of non-empty pairwise disjoint subintervals $(a_j,b_j)\subset K$, $j=1,\ldots,k$
which satisfies $\sum_{j=1}^k(b_j-a_j)<\delta$ it holds that
$\sum_{j=1}^k\|f(b_i)-f(a_i)\|<\varepsilon$.
The collection of all such functions is denoted by $\mathsf{AC}_{\sf loc}(I,V)$.
\item \textit{locally integrable} if $f$ is measurable and $\int_K\|f(t)\|\,{\rm d}t<\infty$ for all compact intervals $K\subseteq I$.
The collection of all such functions is denoted by $L^1_{\sf loc}(I,V)$.
\item \textit{locally essentially bounded} if $f$ is measurable and the 
restriction $f|_J$ to any bounded subinterval $J\subseteq I$ is essentially 
bounded, that is, for every such $J$ there exists $K\subseteq V$ compact such 
that $f(t)\in K$ for almost all $t\in J$.
The collection of all such functions is denoted by $L^\infty_{\sf loc}(I,V)$.
\end{itemize}
The reason we consider the local versions of these notions is that
we are also dealing with dynamics on an infinite interval $I$ (e.g., $I=[0,\infty)$)
where one cannot guarantee absolute continuity but 
``only'' its
local counterpart.

The key property of absolutely continuous functions is that they are precisely those functions which can be recovered via their derivative.
More precisely,
an absolutely continuous function $f:[a,b]\to V$ is differentiable almost everywhere, $\dot f$ is integrable, and
$$
f(t)=f(a)+\int_a^t \dot f(\tau)\,{\rm d}\tau\qquad\text{ for all }t\in[a,b]\,,
$$
cf.~\cite[Thm.~7.20]{Rudin86}.
This readily transfers to the local version of absolute continuity
because $\mathbb R$ is a $\sigma$-finite measure space:
If $f:I\to V$ is locally absolutely continuous, then $f$ is differentiable almost everywhere, $\dot f$ is locally integrable, and for all $t_0,t\in I$ one has
$f(t)=f(t_0)+\int_{t_0}^t \dot f(\tau)\,{\rm d}\tau$.
Another property we need---where $I$ and $V$ are chosen as before---is that
if $V$ is a
normed algebra, then the pointwise products $f\cdot g, g\cdot f:I\to V$
of functions $f\in L^1_{\rm loc}(I,V)$, $g\in L^\infty_{\rm loc}(I,V)$ are in
$L^1_{\rm loc}(I,V)$ (due to the norm being submultiplicative).

A note on topologies on these spaces:
$(L^1,\|\cdot\|_1)$, $(L^\infty,\|\cdot\|_\infty)$ are well-known to be Banach 
spaces \cite[Thm.~13.5]{MeiseVogt97en}, and there are many norms such that the same is true for $\mathsf{AC}([a,b],V)$ such as, for example, $f\mapsto\|f\|_\infty+\|\dot f\|_1$, $f\mapsto|f(t_0)|+\|\dot f\|_1$ for any $t_0\in [a,b]$, or $f\mapsto\|f\|_1+\|\dot f\|_1$ (the latter turns $\mathsf{AC}$ into 
the Sobolev space $L^{1}_{1}([a,b])$ \cite[Ch.~1.1.3]{Mazya11}).
However, if $I$ is not compact then their local counterparts are not Banach 
spaces;
instead $L^1_{\rm loc}, L^\infty_{\rm loc}, \mathsf{AC}_{\rm loc}$ can 
be equipped with a metric which turns them into a Fr\'{e}chet space 
\cite[Lemma 5.17~ff.]{MeiseVogt97en}.\medskip

With this we have introduced the necessary language for studying more general initial value problems.
Let $E:I\times\Omega\to V$ with $I\subseteq\mathbb R$ an interval and $\Omega\subseteq V$ open.
If for $t_0\in I$, $f_0\in\Omega$ there exists $
f\in\mathsf{AC}_{\sf loc}(I,\Omega)$ which satisfies
\begin{equation}\label{eq:gen_ivp}
f(t)=f_0+\int_{t_0}^t E(\tau,f(\tau))\,{\rm d}\tau
\end{equation}
for all $t\in I$, then we say $f$ is a \textit{solution} of~\eqref{eq:gen_ivp}.
If this $f$ is differentiable it even satisfies
$
\dot f(t)=E(t,f(t))$ with initial condition $f(t_0)=f_0\,.
$
Under certain assumptions on the map $E$ one can guarantee the existence of unique solutions of~\eqref{eq:gen_ivp} on a maximal subinterval $J\subseteq I$, $t_0\in J$ \cite[Thm.~54]{Sontag}.
However, as our setting is a lot more specific we may
cut some corners and get straight to linear differential equations $\dot f(t)=A(t)f(t)$, $f(t_0)=f_0$---respectively their integral counterpart
\begin{equation}\label{eq:lin_ivp}
f(t)=f_0+\int_{t_0}^t A(\tau)f(\tau)\,{\rm d}\tau
\end{equation}
---where $t_0\in I$, $f_0\in V$, and $A:I\to\mathcal L(V)$ maps to the linear operators on $V$.
The minimal requirement on $A$ needed for this problem to be well-posed is that it is locally integrable (as solutions are defined to be locally absolutely continuous).
It turns out that this is also sufficient in some sense:
for locally integrable $A$
it follows---again from \cite[Thm.~54]{Sontag}---that~\eqref{eq:lin_ivp}
has a unique maximal solution.
Even better, in our setting $f$ takes values in the quantum states---or, for the operator lift, in the unitary matrices or the quantum channels---all of which are bounded sets \cite{PG06}.
Thus \cite[Prop.~C.3.6]{Sontag} guarantees the existence of
a unique solution\footnote{
Strictly speaking the result quoted from Sontag's book requires the solution to live inside a compact space at all times. However, a bounded set in a normed space is one which is contained in $\overline{B_K(0)}$ for some $K>0$, and in finite dimensions
the latter is compact.
}
for all times $t>t_0$.
For convenience we summarize this in the following lemma:
\begin{lemma}\label{app_A_prop_1}
Let $n\in\mathbb N$, $t_f\in(0,\infty]$.
For all $H:[0,t_f)\to i\mathsf u(n)$ locally integrable and all $U_0\in\mathsf U(n)$ the integral equation
\begin{equation}\label{eq:int-eq-U}
U(t)=U_0-i\int_0^t H(\tau)U(\tau)\,{\rm d}\tau
\end{equation}
has a unique solution $U:[0,t_f)\to\mathsf U(n)$ (i.e.~$U$ is locally absolutely continuous and satisfies~\eqref{eq:int-eq-U} for all $t\in I$).
In this case $U$ satisfies the differential equation $
\dot U(t)=-iH(t)U(t)$, $U(0)=U_0$
for almost all\,\footnote{
If $U$ is differentiable at a so-called Lebesgue point $t_0$ of $H$, then $\dot U(t_0)=-iH(t_0)U(t_0)$ \cite[Thm.~7.11]{Rudin86};
moreover $U$ is differentiable almost everywhere \cite[Thm.~7.20]{Rudin86}
and almost every point is a Lebesgue point \cite[Thm.~7.7]{Rudin86}.
Simple counterexamples show that, in general, one needs that $t_0$ is a Lebesgue point of $H$ to get from~\eqref{eq:schrodinger-int} to~\eqref{eq:schrodinger}, cf.~also \cite[Ch.~4, Ex.~6]{GOCounterexamples03}.
}
$t\in [0,t_f)$.
Moreover, given any $\rho_0\in\mathbb D(\mathbb C^n)$ the Liouville-von Neumann equation $$\dot\rho(t)=-i[H(t),\rho(t)]\,,\quad\rho(t_0)=\rho_0$$ (more 
precisely its integral version $\rho(t)=\rho_0-i\int_0^t[H(\tau),\rho(\tau)]\,{\rm d}\tau$)
has a unique solution
 $\rho:[0,t_f)\to\mathbb D(\mathbb C^n)$ given by $\rho(t)=U(t)\rho_0U(t)^*$
 where $U$ is the solution to~\eqref{eq:int-eq-U} for $U_0=\mathbbm1$.
\end{lemma}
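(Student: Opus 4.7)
The plan is to invoke the standard theory of linear ODEs with locally integrable coefficients, as summarized in the appendix, and then to verify that the solution actually lives in $\mathsf U(n)$ so that the global existence result applies. Concretely, I would first apply \cite[Thm.~54]{Sontag} to the linear problem $\dot U(t)=A(t)U(t)$ with $A(t):=-iH(t)$, regarded as a differential equation on $\mathcal L(\mathbb C^n)\cong\mathbb C^{n\times n}$. Since $A$ is locally integrable (as $H$ is), this yields a unique maximal locally absolutely continuous solution $U:[0,\tau^*)\to\mathbb C^{n\times n}$ of~\eqref{eq:int-eq-U} for some $\tau^*\in(0,t_f]$.

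The main content is to show $U(t)\in\mathsf U(n)$ for all $t\in[0,\tau^*)$, because then boundedness of $\mathsf U(n)$ combined with \cite[Prop.~C.3.6]{Sontag} forces $\tau^*=t_f$. For this I would consider $W(t):=U(t)^*U(t)$, which is locally absolutely continuous as a product of such functions (here the relevant product rule holds because each factor is bounded on compact subintervals and the space is finite-dimensional). At every Lebesgue point $t$ of $H$---which is almost every $t$---both $U$ and $U^*$ are differentiable, and the product rule together with $H(t)^*=H(t)$ gives
\begin{equation*}
\dot W(t)=\dot U(t)^*U(t)+U(t)^*\dot U(t)=iU(t)^*(H(t)^*-H(t))U(t)=0\,.
\end{equation*}
Thus $W$ is constant almost everywhere, and by absolute continuity it is constant on $[0,\tau^*)$, equal to $W(0)=U_0^*U_0=\mathbbm 1$; hence $U(t)$ is an isometry, and in finite dimensions thus unitary. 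The passage between the integral equation and the pointwise ODE $\dot U(t)=-iH(t)U(t)$ at almost every $t$ is exactly the standard Lebesgue-point statement referenced in the footnote of the lemma.

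For the final claim about the Liouville--von Neumann equation, I would define $\rho(t):=U(t)\rho_0U(t)^*$ with $U$ the unique solution of~\eqref{eq:int-eq-U} for $U_0=\mathbbm 1$. Local absolute continuity of $\rho$ follows from that of $U$ and $U^*$, and at every Lebesgue point of $H$ a direct product-rule computation yields $\dot\rho(t)=-iH(t)\rho(t)+i\rho(t)H(t)=-i[H(t),\rho(t)]$; integrating recovers the stated integral equation, while unitarity of $U(t)$ ensures $\rho(t)\in\mathbb D(\mathbb C^n)$. Uniqueness is then another direct application of \cite[Thm.~54]{Sontag} to the linear problem $\dot\rho(t)=-i[H(t),\rho(t)]$ on $\mathbb C^{n\times n}$, whose coefficients are again locally integrable because $\|\,[H(\tau),\cdot\,]\|_{1\to 1}\leq 2\|H(\tau)\|_\infty$. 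The only non-routine step throughout is the unitarity argument, which really relies on the Hermiticity of $H$ and on being allowed to apply the product rule for a.c.~functions; everything else is a bookkeeping application of the general results already assembled in Appendix~A.
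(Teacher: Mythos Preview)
Your argument is correct and follows the same route the paper takes: the paper presents this lemma as a summary of the preceding discussion in Appendix~A, invoking \cite[Thm.~54]{Sontag} for existence and uniqueness of a maximal solution and \cite[Prop.~C.3.6]{Sontag} together with boundedness of the target set for global existence. You in fact supply more than the paper does, since the paper merely asserts that the solution ``takes values in \ldots\ the unitary matrices'' whereas you verify this via the $W(t)=U(t)^*U(t)$ computation; the remainder (the Lebesgue-point remark and the Liouville--von Neumann verification) is exactly in line with the paper's treatment.
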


\section*{Appendix B: Evolutions of Isometries}

In this appendix we derive a result which extends certain curves of isometries to curves of unitaries and, most importantly, the construction conserves any type of regularity.

\begin{lemma}\label{lemma_app_A}
Let $m,n\in\mathbb N$, $m\geq n$, $t_f\in(0,\infty]$ as well as $V:[0,t_f)\to\mathbb C^{m\times n}$ be given such that $V(t)$ is an isometry for all $t\in[0,t_f)$.
If $V$ is locally absolutely continuous, then there exists $U:[0,t_f)\to\mathsf U(m)$, $U(0)=\mathbbm1$ locally absolutely continuous such that
\begin{equation}\label{eq:lemma_app_A_1}
V(t)=U(t)V(0)\qquad\text{ for all }t\in[0,t_f)\,.
\end{equation}
Moreover, any regularity of $V$ pertains to $U$.
\end{lemma}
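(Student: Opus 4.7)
The strategy, hinted at in the proof of Lemma~\ref{lemma_equiv_repr}, is to construct $U$ as a concatenation of $V(t)$ with the ``trailing columns'' of a separately constructed unitary curve $W(t)$ that transports the projection onto $V(0)$'s range to the projection onto $V(t)$'s range. Concretely, I would begin by introducing the skew-Hermitian \emph{generator}
\[
Q(t):=(\mathbbm1-V(t)V(t)^*)\dot V(t)V(t)^*-V(t)\dot V(t)^*(\mathbbm1-V(t)V(t)^*)\,.
\]
Skew-Hermiticity is a one-line check, and local integrability follows because $V$ is locally absolutely continuous (so $\dot V\in L^1_{\sf loc}$) while $\|V(t)\|_\infty=1$ uniformly (so multiplying by the bounded factors $V(t),V(t)^*,\mathbbm1-V(t)V(t)^*$ preserves local integrability). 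Lemma~\ref{app_A_prop_1} then yields a unique locally absolutely continuous solution $W:[0,t_f)\to\mathsf U(m)$ of $\dot W(t)=Q(t)W(t)$, $W(0)=\mathbbm1$.

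The main technical point---and where I expect the bulk of the work to lie---is the algebraic identity
\[
[Q(t),V(t)V(t)^*]=\big(V(t)V(t)^*\big)^{\cdot}\,.
\]
Its proof is a careful bookkeeping exercise using $(VV^*)V=V$ and $V^*(VV^*)=V^*$ (both from $V^*V=\mathbbm1$), together with the consequence $V^*\dot V+\dot V^*V=0$ of $V$ being an isometry. Once this identity is in hand, both $t\mapsto W(t)V(0)V(0)^*W(t)^*$ and $t\mapsto V(t)V(t)^*$ solve the same linear matrix ODE $\dot X(t)=[Q(t),X(t)]$ with initial value $V(0)V(0)^*$; uniqueness from Lemma~\ref{app_A_prop_1}, applied on $\mathbb C^{m\times m}$, forces $W(t)V(0)V(0)^*W(t)^*=V(t)V(t)^*$ for all $t\in[0,t_f)$. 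Right-multiplying by $W(t)$ produces the intertwining relation $V(t)V(t)^*W(t)=W(t)V(0)V(0)^*$ which powers the final step.

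With this, I would define
\[
U(t):=V(t)V(0)^*+W(t)\big(\mathbbm1-V(0)V(0)^*\big)\,.
\]
Trivially $U(0)=\mathbbm1$ and $U(t)V(0)=V(t)$; local absolute continuity of $U$ is inherited from $V$ and $W$. Unitarity reduces to two checks: the cross terms in $UU^*$ and $U^*U$ vanish thanks to $V(0)^*(\mathbbm1-V(0)V(0)^*)=0$ together with $V(t)^*W(t)(\mathbbm1-V(0)V(0)^*)=0$ (obtained by left-multiplying the intertwining relation by $V(t)^*$ and invoking $V(t)^*V(t)=\mathbbm1$); and the diagonal pieces combine as $V(t)V(t)^*+W(t)(\mathbbm1-V(0)V(0)^*)W(t)^*=V(t)V(t)^*+(\mathbbm1-V(t)V(t)^*)=\mathbbm1$, where the second equality uses unitarity of $W$ and, once more, the intertwining relation.

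For the final regularity claim, observe that $Q$ is a polynomial in $V,V^*,\dot V,\dot V^*$, so any $C^k$, smooth, or real-analytic regularity of $V$ on an open subinterval transfers to $Q$; standard theory of linear ODEs with correspondingly regular coefficients then passes this regularity on to $W$ (Picard iteration in the smooth case, Cauchy's existence theorem with holomorphic data in the analytic case). Since $U$ is a sum of products of $V$, $W$, and the constants $V(0),V(0)^*$, the regularity of $U$ matches that of $V$, completing the proof.
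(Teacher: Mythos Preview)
Your proof is correct and follows essentially the same approach as the paper: the same skew-Hermitian generator $Q$, the same transport unitary $W$, and the same construction of $U$ (your coordinate-free formula $U(t)=V(t)V(0)^*+W(t)(\mathbbm1-V(0)V(0)^*)$ specializes exactly to the paper's block form once one reduces to $V(0)=(e_1\,\ldots\,e_n)$). The only cosmetic differences are that you avoid this preliminary reduction and prove the intertwining relation $W(t)V(0)V(0)^*W(t)^*=V(t)V(t)^*$ directly via $[Q,VV^*]=(VV^*)^{\cdot}$ rather than citing Kato, which is a slight streamlining.
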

\begin{proof}
Let us distinguish two cases here:
1. Assume $V(0)=(e_1\,\ldots\,e_n)$ where $e_j\in\mathbb C^m$, $j=1,\ldots,m$ is the $j$-th standard basis vector.
Then~\eqref{eq:lemma_app_A_1} implies $U(t)=(V(t)\ *)$ for all $t$. Thus if we can find a (time-dependent, locally absolutely continuous) orthonormal basis of $({\sf ran}\,V(t))^\perp$ then we can fill up $U(t)$ with it.
This amounts to diagonalizing the orthogonal projection $t\mapsto\mathbbm1-V(t)V(t)^*$
for which Kato provides a construction \cite[Ch.~II, §4.5]{Kato80}:
First define $Q:[0,t_f)\to\mathbb C^{m\times m}$ via
\begin{align*}
Q(t):=\,&\tfrac12\big[ \tfrac{d}{dt}(V(t)V(t)^*),V(t)V(t)^* \big]+\tfrac12\big[ \tfrac{d}{dt}(\mathbbm1-V(t)V(t)^*),\mathbbm1-V(t)V(t)^* \big]\\
=\,&(\mathbbm1-V(t)V(t)^*)\dot V(t) V(t)^*-V(t)\dot V(t)^*(\mathbbm1-V(t)V(t)^*)
\end{align*}
for all $t\in[0,t_f)$.
Note that $Q$ is in $L^1_{\sf loc}([0,t_f))$ because it is the sum of products of the $L^1_{\sf loc}$ function $\dot V(t)$ and the bounded (hence\footnote{
Because local absolute continuity of some $f:I\to V$ implies measurability of $f$ (for all $t_0\in I$, $\varepsilon>0$ the intersection $\overline{B_\varepsilon(t_0)}\cap I$
is compact; thus by assumption
$f|_{\overline{B_\varepsilon(t_0)}\cap I}$ is absolutely continuous, hence continuous, hence measurable),
locally absolutely continuous plus boundedness guarantees locally essentially bounded.
}
$L^\infty_{\sf loc}$)
functions $V(t),V(t)^*,\mathbbm1-V(t)V(t)^*$ (cf.~Appendix~A).
Moreover $Q(t)$ is skew-Hermitian for all $t\in[0,t_f)$.
Together this shows that the linear differential equation (formally: the respective integral equation) $\dot W(t)=Q(t)W(t)$, $W(0)=\mathbbm1$ has a unique solution $W:[0,t_f)\to\mathsf U(m)$.
Then Kato's result shows that
$W(t)(\mathbbm1-V(0)V(0)^*)W(t)^*=\mathbbm1-V(t)V(t)^*$ for all $t\in[0,t_f)$.
We claim that
$$
U(t):=\begin{pmatrix}
V(t)\ \ W(t)\begin{pmatrix}
0\\\mathbbm1_{m-n}
\end{pmatrix}
\end{pmatrix}
$$
---i.e.~if $W(t)=(A(t)\ B(t))$ where $A(t)$ is of same size as $V(t)$, then $U(t)$ equals $(V(t)\ B(t))$---solves~\eqref{eq:lemma_app_A_1} and has all the desired properties:
\begin{itemize}
\item At time zero
$$
U(0)=\begin{pmatrix}
V(0)&W(0)\begin{pmatrix}
0\\\mathbbm1_{m-n}
\end{pmatrix}
\end{pmatrix}=\begin{pmatrix}
\mathbbm1_n&0\\0&\mathbbm1_{m-n}
\end{pmatrix}=\mathbbm1\,.
$$
\item $U(t)$ is unitary: a straightforward computation shows
$
U(t)U(t)^*=V(t)V(t)^*+W(t)(\mathbbm1-V(0)V(0)^*)W(t)
=V(t)V(t)^*+\mathbbm1-V(t)V(t)^*=\mathbbm1$.
\item $U(t)V_0=U(t)(e_1\,\ldots\,e_n)=V(t)$ for all $t$, meaning~\eqref{eq:lemma_app_A_1} holds.
\item $U(t)$ is locally absolutely continuous because $V$ and $W$ are.
\item If $V$ has some level of regularity (i.e.~$C^k$ or analytic) on some open subinterval of $[0,t_f)$,
then $Q$ is $C^{k-1}$ by definition so $W$ is $C^k$ again. Thus $U$ is $C^k$
on said interval because both $V$ and $W$ are.
\end{itemize}
2. Now for the general case: given
$V$ with $V(0)$ arbitrary,
there certainly
exists $U_0\in\mathsf U(m)$ such that $U_0^*V(0)=(e_1\,\ldots\,e_n)$. Defining
$\tilde V(t):=U_0^*V(t)$ turns~\eqref{eq:lemma_app_A_1} into
$\tilde V(t)=U_0^*V(t)=U_0^*U(t)V(0)=U_0^*U(t)U_0\tilde V(0)
$.
Because $\tilde V(0)=(e_1\,\ldots\,e_n)$ we 
can follow step 1~above
to find $\tilde U(t)$ locally absolutely continuous with $\tilde U(0)=\mathbbm1$ such that 
$\tilde V(t)=\tilde U(t)\tilde V(0)$,.
This
in turn yields a locally absolutely continuous solution of~\eqref{eq:lemma_app_A_1} 
via $ U(t):=U_0\tilde U(t)U_0^*$.
\end{proof}

\begin{remark}\label{rem_cont_ac_app}
Lemma~\ref{lemma_app_A} continues to hold
if local absolute continuity of $V$ is replaced by regular continuity:
as explained in the proof all we need is a (time-dependent, continuous) orthonormal basis of $({\sf ran}\,V(t))^\perp={\sf ker}\,(V(t)V(t)^*)$.
But the latter is equal to ${\sf ran}\,(\mathbbm1-V(t)V(t)^*)$ \cite[5.15]{Rudin91}
so the existence of a continuous basis is guaranteed by an old result of Dole\v{z}al \cite{Dolezal64} (there, set $A(t)=\mathbbm1-V(t)V(t)^*$).
Finally this basis can be made orthonormal via the usual Gram-Schmidt procedure which, importantly, preserves continuity.
\end{remark}

\section*{Appendix C: Auxiliary Results Regarding (Lipschitz) Continuity}
Recall that, given metric spaces $(X,d_X)$, $(Y,d_Y)$, a map $f:X\to Y$ is called 
\textit{Lipschitz continuous} (or just \textit{Lipschitz}) if there exists $K>0$ such that $d_Y(f(x_1),f(x_2))\leq Kd_X(x_1,x_2)$ for all $x_1,x_2\in X$. Any such $K$ is then referred to as \textit{Lipschitz constant} of $f$.
First we need the following well-known relation between the norm of the derivative and Lipschitz constants \cite[(8.5.2)]{Dieudonne69}:
\begin{lemma}\label{lemma_A2}
Let $K\geq 0$, a real interval $I$, a complete normed space $(X,\|\cdot\|_X)$, and a continuous mapping $f:I\to X$ be given.
If there exists an at most countable subset $I_0$ of $I$ such that for all $t\in I\setminus I_0$, $f$ has a derivative $f'(t)\in X$ at $t$ (w.r.t.~$I$)\;\footnote{
This means that there exists $f'(t)\in X$ such that
$
\|\frac{f(t+h)-f(t)}{h}-f'(t)\|_X\to 0
$
as $h\to 0$ under the additional condition $t+h\in I$ for all $h$ small enough \cite[Def.~3.2.3]{HillePhillips}.
}
such that $\|f'(t)\|_X\leq K$, then $f$ is Lipschitz continuous with Lipschitz constant at most $K$.
\end{lemma}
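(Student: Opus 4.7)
The plan is classical: I will reduce to showing, for each pair $a<b$ in $I$ and each $\varepsilon>0$, the inequality $\|f(b)-f(a)\|_X\leq K(b-a)+\varepsilon\bigl(1+(b-a)\bigr)$, and then let $\varepsilon\to 0$. This bound, together with symmetry in $a,b$, gives the Lipschitz estimate with constant $K$.

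Fix $a<b\in I$ and $\varepsilon>0$, and enumerate $I_0\cap(a,b]=\{s_1,s_2,\ldots\}$ (finite or countable). The idea is to build a nondecreasing comparison function $\psi:[a,b]\to[0,\infty)$ of the form
\begin{equation*}
\psi(t):=\varepsilon+(K+\varepsilon)(t-a)+\varepsilon\!\!\sum_{n:\,s_n<t}\!2^{-n},
\end{equation*}
so that $\psi$ has a ``reserve'' jump of size $\varepsilon\cdot 2^{-n}$ at each exceptional point $s_n$, while still satisfying $\psi(b)\leq\varepsilon+(K+\varepsilon)(b-a)+\varepsilon$. Then consider
\begin{equation*}
S:=\{t\in[a,b]\,:\,\|f(\tau)-f(a)\|_X\leq\psi(\tau)\text{ for all }\tau\in[a,t]\},
\end{equation*}
which contains $a$ by construction, and, by continuity of $f$ plus the right-continuity built into $\psi$ at non-exceptional points, will turn out to be closed in $[a,b]$. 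Writing $c:=\sup S$, I aim to show $c=b$ by contradiction, which then yields $\|f(b)-f(a)\|_X\leq\psi(b)\leq K(b-a)+\varepsilon(1+(b-a))$, as desired.

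The core of the argument is ruling out $c<b$, which splits into two cases. If $c\notin I_0$, then $f$ is differentiable at $c$ with $\|f'(c)\|_X\leq K$; by definition of the derivative there is a right-neighborhood $[c,c+\delta]\subset[a,b]$ on which $\|f(t)-f(c)\|_X\leq(K+\varepsilon/2)(t-c)$, and combining with $c\in S$ gives $\|f(t)-f(a)\|_X\leq\psi(c)+(K+\varepsilon/2)(t-c)\leq\psi(t)$, contradicting $c=\sup S$. If instead $c=s_n\in I_0$, the upward jump of size $\varepsilon\cdot 2^{-n}$ that $\psi$ accrues immediately to the right of $c$ is used as slack: by continuity of $f$ there is $\delta>0$ with $\|f(t)-f(c)\|_X<\varepsilon\cdot 2^{-n-1}$ on $[c,c+\delta]$, which, together with $\psi(t)\geq\psi(c)+\varepsilon\cdot 2^{-n}$ for $t>c$, again shows $[c,c+\delta]\subset S$.

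The main technical obstacle is precisely this bookkeeping around $I_0$: one must set up $\psi$ so that (i) the total ``reserve'' $\varepsilon\sum_n 2^{-n}=\varepsilon$ is finite and tends to $0$ with $\varepsilon$, and (ii) the one-sided continuity properties of $\psi$ match the direction in which $S$ is extended, so that $S$ remains closed and the jump is available exactly where mere continuity (rather than differentiability) of $f$ is used. Once this is arranged, completeness of $X$ plays no role beyond ensuring that the derivative $f'(c)$ actually lives in $X$, and the rest is a clean connectedness argument. Letting $\varepsilon\to 0$ at the end gives the claimed Lipschitz constant~$K$.
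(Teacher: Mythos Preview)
Your argument is the classical ``creeping'' (forward-induction) proof of the mean-value inequality with a countable exceptional set, and it is correct in substance. A couple of small cleanups: (a) the phrase ``right-continuity built into $\psi$'' should be \emph{left}-continuity---your $\psi$ jumps \emph{up} immediately to the right of each $s_n$, so it is left-continuous everywhere, and that is exactly what you need to conclude $c:=\sup S\in S$; (b) your case split ``$c\notin I_0$'' vs.\ ``$c=s_n$'' tacitly assumes $c>a$, since you only enumerated $I_0\cap(a,b]$; the residual case $c=a\in I_0$ is handled by the initial reserve $\psi(a)=\varepsilon$ together with continuity of $f$. Also, completeness of $X$ is in fact not used at all (the derivative, when it exists as a limit of difference quotients, already lies in $X$).

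As for comparison with the paper: the paper does not give a proof of this lemma---it simply quotes it from Dieudonn\'e, \emph{Foundations of Modern Analysis}, item (8.5.2). Your proposal is precisely the argument one finds there, so you have faithfully supplied the proof behind the citation.
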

The next auxiliary result is concerned with estimating the distance between Lipschitz curves which intersect at certain points.
\begin{lemma}\label{lemma_A1}
Let $I\subseteq\mathbb R$ be a bounded from below, closed interval. Moreover let $I_f$ be a closed and isolated subset of $I$ with strictly increasing enumeration $(t_j)_{j\in{\sf N}}$, ${\sf N}\subseteq\mathbb N$ 
such that
$$
\Delta(I_f):=\sup
\big\{|s-t|:  s,t\in I\setminus I_f\text{ such that }{\rm conv}(\{s,t\})\subseteq I\setminus I_f  \big\}<\infty\,,
$$
that is, each connected component of $I\setminus I_f$ has (finite) length at most $\Delta(I_f)$.
Now given any normed space $(X,\|\cdot\|_X)$ and $f,g:I\to X$ Lipschitz continuous with respective Lipschitz constants $K_f,K_g>0$,
if one has $f(t)=g(t)$ for all $t\in I_f$, then
$$
\|f-g\|_{\rm sup} \leq (K_f+K_g)\Delta(I_f)\,.
$$
\end{lemma}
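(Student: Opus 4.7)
The plan is to establish the bound pointwise: for each $t \in I$ I will show $\|f(t)-g(t)\|_X \leq (K_f+K_g)\Delta(I_f)$ and then take the supremum. If $t \in I_f$ the hypothesis gives $f(t) = g(t)$, so the only non-trivial case is $t \in I \setminus I_f$, and the idea is to transport $t$ to a nearby point of $I_f$ and then combine the two Lipschitz estimates.

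Since $I_f$ is relatively closed in $I$, the complement $I \setminus I_f$ is relatively open and hence a disjoint union of relatively open subintervals, namely its connected components. Let $C$ be the component containing $t$ and write its closure in $\mathbb R$ as $[a,b]$. Finiteness of $\Delta(I_f)$ forces $b-a \leq \Delta(I_f)$ (and in particular $b<\infty$, which rules out problematic unbounded components when $I$ is itself unbounded above). I claim that---assuming $I_f$ is non-empty, which is the only non-vacuous case---at least one of $a,b$ lies in $I_f$. Indeed, any boundary point of $C$ that lies inside $I$ must belong to $I_f$: if it belonged to another component of $I \setminus I_f$, the openness of that component would contradict its being a limit of points of $C$. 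Hence the only way for both $a$ and $b$ to escape $I_f$ is for them to coincide with endpoints of $I$ not in $I_f$, which forces $C = I$ and $I_f = \emptyset$.

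Picking any $t^* \in \{a,b\} \cap I_f$, one has $|t - t^*| \leq b-a \leq \Delta(I_f)$ together with $f(t^*) = g(t^*)$, so the triangle inequality and the Lipschitz bounds yield
$$
\|f(t)-g(t)\|_X \leq \|f(t)-f(t^*)\|_X + \|g(t^*)-g(t)\|_X \leq (K_f+K_g)\,|t-t^*| \leq (K_f+K_g)\,\Delta(I_f),
$$
from which the claim follows by taking the supremum over $t \in I$. The only mildly delicate point is the endpoint bookkeeping for the component $C$ (ruling out the case where neither $a$ nor $b$ lies in $I_f$), which the short contradiction above dispatches; every other step is a direct application of the hypotheses.
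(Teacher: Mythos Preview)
Your argument is correct and follows essentially the same route as the paper's: for each $t\notin I_f$ locate a nearby anchor point in $I_f$ at distance at most $\Delta(I_f)$, then combine the two Lipschitz bounds via the triangle inequality. You organize this via the connected components of $I\setminus I_f$, whereas the paper runs through the explicit enumeration $(t_j)$ and treats the leftmost interval $[t_0,t_1)$ as a separate case; your formulation is slightly cleaner but the underlying idea is identical.

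One minor quibble: dismissing $I_f=\emptyset$ as ``the only non-vacuous case'' is not quite right---the hypothesis $f|_{I_f}=g|_{I_f}$ is then vacuously satisfied, but the conclusion can fail (take $f,g$ distinct constants on a bounded $I$). That said, the paper's own proof tacitly assumes $I_f\neq\emptyset$ as well (it needs $t_1$ to exist), so this is a shared edge case in the lemma's statement rather than a defect of your proof relative to the original.
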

\begin{proof}
W.l.o.g.~${\sf N}=\mathbb N$; the case $|{\sf N}|<\infty$ is proven analogously.
Defining $t_0:=\min I$ (which exists by assumption)
we find
$I=[t_0,\infty)=\bigcup_{j=0}^{\infty}[t_j,t_{j+1})$.
This lets us compute
\begin{align*}
\|f-g\|_{\rm sup}&=\sup_{t\in I}\|f(t)-g(t)\|_X\\
&=\max\big\{\sup_{t\in[t_0,t_{1})}\|f(t)-g(t)-f(t_1)+g(t_1)\|_X,\\
&\hphantom{=\sup\big\{}\ \sup_{j\in\mathbb N}\sup_{t\in[t_j,t_{j+1})}\|f(t)-g(t)-f(t_j)+g(t_j)\|_X\big\}\\
&\leq\max\big\{\sup_{t\in[t_0,t_{1})}\|f(t)-f(t_1)\|_X+\sup_{t\in[t_0,t_{1})}\|g(t)-g(t_1)\|_X,\\
&\hphantom{=\sup\big\{}\ \sup_{j\in\mathbb N}\sup_{t\in[t_j,t_{j+1})}\big(\|f(t)-f(t_j)\|_X+\|g(t)-g(t_j)\|_X\big)\big\}\\
&\leq (K_f+K_g) \max\big\{\sup_{t\in[t_0,t_1)}|t-t_1|,\sup_{j\in\mathbb N}\sup_{t\in[t_j,t_{j+1})}|t-t_j|\big\}\\
&=(K_f+K_g)\sup_{j\in\mathbb N_0}|t_j-t_{j+1}|= (K_f+K_g) \Delta(I_f)\,.\qedhere
\end{align*}
\end{proof}

We conclude this appendix by
asserting Lipschitz continuity of the matrix logarithm on the unitary group.
This complements the more known fact that the exponential on the unitary algebra has Lipschitz constant $1$, i.e.~for all $H_1,H_2\in\mathbb C^{n\times n}$ Hermitian it holds that $\|e^{iH_1}-e^{iH_2}\|_\infty\leq\|H_1-H_2\|_\infty$ (follows from Eq.~\eqref{eq:duhamel_U}).

\begin{lemma}\label{lemma_app_C_unitary_gen}
The following statements hold:
\begin{itemize}
\item[(i)]
Given $z\in\mathbb C$, $|z|=1$ there exists $\phi\in(-\pi,\pi]$ such that $z=e^{i\phi}$ and
$
2|\phi|\leq\pi|z-1|
$.
\item[(ii)] Given $U\in\mathbb C^{n\times n}$ unitary there exists $H\in i\mathfrak u(n)$ such that $U=e^{iH}$ and
$$
2\|H\|_\infty\leq\pi\|U-\mathbbm1\|_\infty\,.
$$
\end{itemize}
\end{lemma}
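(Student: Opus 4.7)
The plan is to handle (i) by an explicit computation and (ii) by spectral-theoretic reduction to (i).

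For part (i), I would start from the identity
\[
|z-1|^2 = |e^{i\phi}-1|^2 = 2-2\cos\phi = 4\sin^2(\phi/2)\,,
\]
which gives $|z-1| = 2|\sin(\phi/2)|$. So the claim reduces to showing $|\phi| \leq \pi|\sin(\phi/2)|$ for $\phi\in(-\pi,\pi]$. Substituting $\theta:=\phi/2\in(-\pi/2,\pi/2]$, this is $2|\theta|\leq \pi|\sin\theta|$, i.e.\ $|\sin\theta|\geq \tfrac{2|\theta|}{\pi}$. This is precisely Jordan's inequality (the chord joining $(0,0)$ and $(\pi/2,1)$ lies below the graph of $\sin$ on $[0,\pi/2]$ by concavity, and both sides are odd in $\theta$), so the estimate is immediate.

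For part (ii), I would invoke the spectral theorem for unitary matrices: write $U = \sum_{j=1}^{n} e^{i\phi_j}\,|v_j\rangle\langle v_j|$, where $\{v_j\}$ is an orthonormal eigenbasis and the principal-branch phases $\phi_j\in(-\pi,\pi]$ are chosen so that $e^{i\phi_j}$ are the eigenvalues of $U$. Then set
\[
H := \sum_{j=1}^n \phi_j\,|v_j\rangle\langle v_j|\in i\mathfrak u(n)\,,
\]
so that $e^{iH}=U$ by the functional calculus. Because $H$ is Hermitian, $\|H\|_\infty = \max_j|\phi_j|$, and because $U-\mathbbm1$ is diagonal in the same basis, $\|U-\mathbbm1\|_\infty = \max_j|e^{i\phi_j}-1|$. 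Applying part (i) to each eigenvalue and taking the maximum over $j$ yields $2\|H\|_\infty \leq \pi\|U-\mathbbm1\|_\infty$.

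There is no real obstacle here; the only subtle point is the choice of branch in (ii), which must be the principal one so that the per-eigenvalue bound from (i) is applicable. The spectral theorem makes the rest automatic, since the operator norm of a normal matrix coincides with the sup norm of its eigenvalues.
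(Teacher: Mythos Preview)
Your proof is correct and follows essentially the same route as the paper: the half-angle identity combined with Jordan's inequality for (i), and spectral decomposition plus an eigenvalue-wise application of (i) for (ii). The presentations differ only cosmetically---you write the spectral decomposition in bra-ket form while the paper uses the matrix form $U=V\operatorname{diag}(z_1,\ldots,z_n)V^*$---but the underlying argument is identical.
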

\begin{proof}
(i): Given $z$ on the unit circle there exists (unique) $\phi\in(-\pi,\pi]$ such that $z=e^{i\phi}$. We compute
\begin{align*}
\pi|z-1|=\pi|e^{i\phi}-1|=\pi\sqrt{(\cos(\phi)-1)^2+(\sin(\phi))^2}=\pi\sqrt{2(1-\cos(\phi))}\,.
\end{align*}
Combining the double-angle formula $\cos(\phi)=1-2(\sin(\frac{\phi}{2}))^2$ with Jordan's inequality $\frac{2}{\pi}\leq\frac{\sin(\alpha)}{\alpha}$\,---the latter being true for all $\alpha\in[-\frac{\pi}{2},\frac{\pi}{2}]$ \cite[Ch.~4.3]{Abramowitz72}---one finds 
$1-\cos(\phi)\geq \frac{2\phi^2}{\pi^2}$
for all $\phi\in[-\pi,\pi]$.
This results in the desired estimate
$
\pi|z-1|=\pi\sqrt{2(1-\cos(\phi))}\geq\pi\sqrt{2} \,\frac{\sqrt{2}}{\pi}|\phi|=2|\phi|\,.
$

(ii): Because $U$ is unitary it is normal and has eigenvalues of absolute value one.
Thus there exist $V\in\mathbb C^{n\times n}$ unitary as well as $z_1,\ldots,z_n\in\mathbb C$ with $|z_1|=\ldots=|z_n|=1$ such that $U=V{\rm diag}(z_1,\ldots,z_n)V^*$ \cite[Thm.~2.5.4]{HJ1}.
By (i) there exist $\phi_1,\ldots,\phi_n\in(-\pi,\pi]$ such that
$z_j=e^{i\phi_j}$ and $2|\phi_j|\leq\pi|z_j-1|$ for all $j=1,\ldots,n$.
Define $H:=V{\rm diag}(\phi_1,\ldots,\phi_n)V^*$;
this Hermitian matrix satisfies $U=e^{iH}$ by construction.
Using unitary invariance of the operator norm one finds
\begin{align*}
\pi\|U-\mathbbm1\|_\infty&=\pi\|V{\rm diag}(z_1,\ldots,z_n)V^*-VV^*\|_\infty\\
&=\pi\| {\rm diag}(z_1-1,\ldots,z_n-1) \|_\infty\\
&=\max_{j=1,\ldots,n}\pi|z_j-1|\geq \max_{j=1,\ldots,n}2|\phi_j|=2\|H\|_\infty\,.\qedhere
\end{align*}
\end{proof}

\bibliographystyle{mystyle}
\bibliography{../../../../control21vJan20.bib}
\end{document}